\DeclareMathOperator*{\argmax}{arg\,max}
\begin{document}

\title{Dominant Strategy Truthful, Deterministic Multi-armed Bandit Mechanisms with Logarithmic Regret for Sponsored Search Auctions}

\titlerunning{$\Delta$-UCB Multi-armed Bandit Mechanism}        % if too long for running head

\author{Divya Padmanabhan$^{\dagger}$\and
Satyanath Bhat$^{*}$ \and 
      Prabuchandran K. J.$^{\ddagger}$ \and Shirish Shevade$^{\ddagger}$
      \and Y. Narahari$^{\ddagger}$
}

%\authorrunning{Short form of author list} % if too long for running head

\institute{$^\dagger$ Singapore University of Technology and Design, Singapore\\
   $^*$ National University of Singapore \\$^{\ddagger}$ Indian Institute of Science (IISc), Bangalore
}

\date{Submitted: December 2018, Revised: May 2020}
% The correct dates will be entered by the editor

\maketitle

\begin{abstract}
Stochastic multi-armed bandit (MAB) mechanisms are widely used in sponsored search auctions, crowdsourcing, online procurement, etc.  Existing stochastic MAB mechanisms with a deterministic payment rule, proposed in the literature, necessarily suffer a regret of $\Omega(T^{2/3})$, where $T$ is the number of time steps. This happens because the existing mechanisms consider the worst case scenario where the means of the agents' stochastic rewards are separated by a very small amount that depends on $T$. 
We make, and, exploit the crucial observation that in most scenarios, the separation between the agents' rewards is rarely a function of $T$. Moreover, in the case that the rewards of the arms are arbitrarily close, the regret contributed by such sub-optimal arms is minimal. 
Our idea is to allow the center to indicate the resolution, $\Delta$, with which the agents must be distinguished. This immediately leads us to introduce the notion of $\Delta$-Regret. Using sponsored search auctions as a concrete example (the same idea applies for other applications as well), we propose a dominant strategy incentive compatible (DSIC) and individually rational (IR), deterministic MAB mechanism, based on ideas from the Upper Confidence Bound (UCB) family of MAB algorithms.  Remarkably, the proposed mechanism $\Delta$-UCB achieves a $\Delta$-regret of $O(\log T)$  for the case of sponsored search auctions.  We first establish the results for  single slot sponsored search auctions and then non-trivially extend the  results to the case where multiple slots are to be allocated.
\keywords{Multi-armed bandit mechanism\and DSIC \and Deterministic}
% \PACS{PACS code1 \and PACS code2 \and more}
% \subclass{MSC code1 \and MSC code2 \and more}
\end{abstract}

\section{Introduction}
\label{Introduction}
Multi-armed bandit (MAB) algorithms \cite{Bubeck2012a} are now widely used to model and solve problems where decisions are required to be made sequentially at every time step and there is an \textit{exploration - exploitation} dilemma. This   dilemma is the tradeoff that the planner faces in deciding whether to  explore arms that may yield higher rewards in the future or exploit the arms that have  already yielded high rewards in the past. If the rewards are generated from fixed distributions with unknown parameters, the setting goes by the name stochastic MAB \cite{Bubeck2012a}.   Popular algorithms in the stochastic MAB setting include Upper Confidence Bound (UCB) based algorithms \cite{UCBAuer2002} and Thompson Sampling \cite{jmlrShipraThompson} based algorithms. These algorithms incur $O(\log T)$ regret where $T$ is the total number of time steps. MAB algorithms are well studied with  several variants  \cite{Kleinberg2010,Bubeck2012b,Kapoor2018,chen2013combinatorial} and applications \cite{Santiago2017,PadmanabhanIJCNN16,scott2010modern,dirkx2018optimizing}. 

When the arms are controlled  by {\em strategic\/} agents, we need to tackle additional challenges. 
Mechanism design \cite{NARAHARI2014,NISAN2007,Nisan2007jair} has been applied in this context, leading to stochastic MAB mechanisms \cite{Liu2017}.
The design of such mechanisms requires ideas from online learning as well as mechanism design, both of which are increasingly gaining importance in the field of artificial intelligence.
%The sequential decision problem becomes much harder as the agents may
%manipulate the learning process \cite{Jain2016} to gain higher utilities.
An immediate application of stochastic MAB mechanisms is in sponsored search auctions (SSA). In SSA,  there are several advertisers who wish to display their ads along with the search results generated in response to a query from an internet user. In the standard model, an advertiser has only one ad to display. We use the terms agent, ad, and advertiser interchangeably. There are two components that are of interest to the planner or the search engine, (1) {\em stochastic component\/}: click through rate (CTR) of the ads or the probability that a displayed ad receives a click (2) {\em strategic component\/}: valuation of the agent for every click that the agent's ad receives. The search engine would seek to allocate a slot to an ad which has the maximum social welfare (product of click through rate and valuation). However neither the CTRs nor the valuations of the agents are known. This calls for a learning algorithm to learn the stochastic component (CTR) as well as a mechanism to elicit the strategic component (valuation). This problem could become much harder as the agents may manipulate the learning process \cite{Babaioff2014,Jain2016} to gain higher utilities.

For single slot SSA, it is known that any deterministic MAB mechanism (that is, a MAB mechanism with a deterministic allocation and payment rule) suffers a regret \cite{Bubeck2012a,Feldman2014} of  $\Omega(T^{2/3})$ \cite{Babaioff2014}. Furthermore, there exists a deterministic MAB mechanism with regret matching the theoretical lower bound \cite{Babaioff2014} and also satisfies ex-post truthfulness, the strongest notion of truthfulness (a posteriori to the clicks). When a more relaxed notion of truthfulness is targeted (truthfulness in expectation of the clicks), the regret guarantee improves to $O(T^{1/2})$~\cite{babaioff2010truthful}.  Truthfulness in expectation has also been achieved in \cite{Gonen2007,Pavlov2009}. The regret can be further
improved when randomized mechanisms are used and in fact the regret in this space is $O(\log T)$ \cite{babaioff2010truthful,Jain2018}.
However, the high variance that is inevitable to the payments in randomized
mechanisms is a serious deterrent to the use of randomized mechanisms. Towards reducing the variance, \cite{Ghalme2017} propose a MAB mechanism using Thompson sampling \cite{jmlrShipraThompson}. However the notion of truthfulness achieved is `within period DSIC' and with high probability. Thus again, only a weaker notion of truthfulness is achieved compared to ex-post truthfulness.

In this work, we observe that the
characterization provided by Babaioff et al. \cite{Babaioff2014} targets the worst case scenario. In particular, in the lower bound proof 
of regret of $\Omega(T^{2/3})$, they consider an example scenario where the actual 
separation, $\bar{\Delta}$, between the expected rewards of the arms  is a
function of $T$. We note that when a similar example ($\bar{\Delta} = T^{-1}$) is used with the popular UCB algorithm~\cite{UCBAuer2002}, the number of pulls of sub-optimal arms could be linear, even in the non-strategic case.
Hence, a dependence of $\bar{\Delta}$ on $T$ is severely restrictive for the case
when the rewards are stochastic, even when the arms are non-strategic. We make the
observation that $\bar{\Delta}$ is in most situations independent of
$T$. This motivates our main idea
in this paper, which is to
provide the planner an option to specify a parameter $\Delta$, which is the tolerance or
distinguishing level for sub-optimal arms. The understanding is that any arm that is within
$\Delta$ from the best arm will not cause any additional regret to the planner. For example, the best arm may yield  expected reward of $6.000$ while a sub-optimal arm may yield a very close expected reward of $5.999$. The planner is typically indifferent to such small differences. Traditional exploration-separated {\color{blue} schemes} end up spending a huge number of exploration rounds in order to distinguish between these two closely separated arms. 
\begin{tcolorbox}
\textbf{Setting the value of $\Delta$: An Example \\}
The value of $\Delta$ is set by the central planner depending on how well he would like to distinguish between the arms. For example, consider the case where there are two agents. Agent $1$ has a CTR $\mu_1 = 0.8$ and valuation for every click $\theta_1 = 5$ units. Agent $2$ has a CTR $\mu_2 = 0.3999$ and a valuation for every click $\theta_2 = 10$. Agent 1 is the more preferred agent as his expected social welfare is $\mu_1 \theta_1 = 4$ while the expected social welfare for agent 2 is $\mu_2 \theta_2 = 3.999$. 
 Then the actual separation between the agents, $\bar{\Delta} = 4 - 3.999 = 0.001$. But the planner may be indifferent to such a small difference of $0.001$ in expected social welfare. Therefore he would be satisfied with selecting either of the agents. Hence, he should set the parameter $\Delta$ to any value greater than $0.001$.
 \end{tcolorbox}
 This notion of $\Delta$ tolerance will require an appropriate definition of regret, which we call
$\Delta$-regret. Focussing on $\Delta$-regret instead of   the usual notion of regret helps us to reduce the number of exploration rounds significantly from $O(T^{2/3})$ to $O(\log T)$.
We propose an exploration separated mechanism based on UCB, which achieves a $\Delta$-regret
of $O(\log T)$.  This mechanism can be readily applied in several settings such as SSA,
crowdsourcing, and online procurement. For the rest of the paper, however,
we use SSA as a running example.

\subsection*{Contributions:}
(1) %With a careful analysis of MAB mechanisms,we observe that the scenarios where a regret of $O(T^{2/3})$ is suffered by existing mechanisms due to the heavy exploration, occur exclusively when the separation between the agents is a function of $T$. We note that in several scenarios, such heavy exploration is not required. 
We make the crucial observation that in most MAB scenarios, the separation between the agents' rewards is rarely a function of $T$ (the number of time steps). Moreover, in the case that the rewards of the arms are arbitrarily close, the regret contributed by such sub-optimal arms is negligible. We exploit this observation
to allow the center to specify the resolution, $\Delta$, with which the agents must be distinguished. We introduce the notion of $\Delta$-Regret to formalize this regret. \\
(2) Using sponsored search auctions as a concrete example, we propose a dominant strategy incentive compatible (DSIC) and individually rational (IR) MAB mechanism with a deterministic allocation and payment rule, based on ideas from the UCB family of MAB algorithms.  The proposed mechanism $\Delta$-UCB achieves a $\Delta$-regret of $O(\log T)$  for the case of single slot sponsored search auctions. The truthfulness achieved by $\Delta$-UCB is a posteriori to the click realizations and is the strongest form of truthfulness. {\color{blue} This loss of $O(\log T)$ would not have been possible otherwise if the traditional notions of regret were used.  In particular the number of exploration rounds in $\Delta$-UCB is $O(\log T)$ as opposed to the $O(T^{2/3})$ rounds which were mandatory so far for ensuring a truthful, deterministic mechanism.  Thus we now enable the planner to be relieved from this huge number of exploration rounds. We also show that a lower bound on the $\Delta$-regret suffered by any mechanism is  $\Omega(\log T)$}.\\ 
(3) We  non-trivially extend the  above results to the case where multiple slots are to be allocated.
 Here again, our mechanism is DSIC, IR, and achieves  a $\Delta$-regret that is $O(\log T)$. %\end{itemize}

Our results are generic to stochastic MAB  mechanisms and can be applied to other popular applications such as crowdsourcing and online procurement.

\section{Relevant Work}
In the area of MAB mechanisms, a lot of work has been done in sponsored
search auctions. Babaioff et. al.\cite{Babaioff2014} provide a
characterization of truthful MAB mechanisms, wherein the objective is
to maximize social welfare. They introduce the notion of influential
rounds. The influential rounds are the rounds where the  parameters of
reward distributions (CTRs) are learnt. One of the characterizations of
truthful deterministic mechanisms is that the allocation must be
exploration separated, that is, in such  influential rounds, the
allocation must not depend on the bids of the agents. The allocation
is also required to be point wise monotone. One of the main results of
their paper is that any truthful, deterministic MAB mechanism incurs a
 regret of $\Omega(T^{2/3})$.  {\color{blue} In particular, their analysis holds an adversarial nature,
 as the sub-optimality between the best and second best arm is chosen as if by an adversary,  to be proportional to $T^{-1/3}$. Such a choice 
ensures a huge regret for any truthful, deterministic mechanism. }
They also provide a mechanism which incurs a matching upper bound
regret  of $O(T^{2/3})$.  Devanur et. al. \cite{Devanur2009} concurrently
provide similar bounds on the regret when the objective is revenue
maximization rather than social welfare maximization.

All the above results pertain to the setting of single slot
auctions where there is a single slot for which
the agents compete. In the generalization of this setting 
multiple slots are reserved for ads. This setting is more challenging as every slot is not identical and some slots are more prominent than the others. MAB mechanisms have also
been extended to the multiple slot setting \cite{Gatti2012} in line
with the characterization in \cite{Babaioff2014}. Hence, a similar
regret  of $O(T^{2/3})$ on the social welfare has been attained here
as well. Similar results are also stated in the characterisation provided in  \cite{Akash2012}.

MAB mechanisms have also been proposed in the context of crowdsourcing
\cite{Biswas2015}. Some of these mechanisms incur a regret of $O(\log T)$. This is rendered possible due to the specific nature of the problem in hand. In particular,  Bhat et. al. \cite{BhatAAMAS16} look at divisible tasks. 
Jain et. al. \cite{JainAAMAS16} look at deterministic mechanisms where a block of
tasks is allocated to each agent and provide a weaker notion of
truthfulness.  

The lower bound of both of social welfare regret as well as regret in
the revenue of $\Omega(T^{2/3})$ have influenced subsequent
research to follow similar assumptions and thereby obtain a similar regret. However, we show in this work that it is indeed possible to design a deterministic mechanism which attains logarithmic regret and is also truthful in the dominant strategy incentive compatible (DSIC) sense \cite{Myerson1991}. DSIC, of course, is the most preferred form of truthfulness \cite{NARAHARI2014}. This work opens up the possibility for a planner to move away from the worst case scenario to a more realistic scenario. We enable the  planner to specify a resolution parameter for distinguishing the arms, introduce the notion of $\Delta$-regret and thereafter propose a mechanism that ensures that the number of exploration rounds and hence the regret suffered is only $O(\log T)$ instead of the  expensive $\Omega(T^{2/3})$ available currently in state of the art. {\color{blue} We summarize the contrast between our work and the state of the art in \Cref{tab:comparison}}.
\begin{table}[h!]
{
\color{blue}
\begin{tabular}{|p{0.18\linewidth}|p{0.35\linewidth}|p{0.35\linewidth}|}\hline
& \cite{Babaioff2014} & Our work \\\hline\hline
Loss studied & Regret & $\Delta$-regret \\\hline
Additional parameters & None & $\Delta$: tolerance specified by the planner \\\hline
Mechanism properties & DSIC, deterministic, exploration separated,  $O(T^{2/3})$ exploration rounds & DSIC, deterministic, exploration separated,  $O(\log T)$ exploration rounds \\\hline
Upper bound on loss  &  $O(T^{2/3})$ & $O(\log T)$ \\\hline
Lower bound on loss &  $\Omega(T^{2/3})$ & $\Omega(\log T)$ \\\hline
\end{tabular}
\caption{Comparison of our results with state of the art}
\label{tab:comparison}
}
\end{table}

\section{The Model: Single Slot SSA}
We now describe our SSA setting.  For ease of reference, our notations are provided in \Cref{tab:notations-single-slot}. Let  $K$ be the number of agents or arms. We denote the set of arms by $[K]$. Each of the $K$ arms, when pulled, gives rewards from distributions with unknown parameters. We assume here, that the form of the distributions are known but the parameters of the distributions are unknown. In SSA, the rewards of the arms correspond to clicks. The clicks for the advertisements are assumed to be generated from Bernoulli distributions with parameters $\mu_1, \mu_2, \ldots, \mu_K$ where $\mu_i$ is the CTR or probability that advertisement $i$ receives a click once observed. The means $\mu_1, \ldots, \mu_K$ are unknown.

A click realization $\rho$ represents the click information of every agent at all rounds, that is, $\rho_i(t) =1 $ if agent $i$ received a click in round $t$. In a round $t$, only the click information of the allocated agent is revealed after the completion of the round. Click information of all other unallocated agents is never known to the planner.

The agents also have their valuations for each click they receive. We work in the `pay per click' setting where the  agent pays the search engine for each click received. Let the true valuation of agent $i$ be $\theta_i$ for a click. $\theta_i$ is a private type of agent $i$ and is never known to the learner. However the agent is asked to bid his valuation. Let the bid of agent $i$ be $b_i$. We denote by a vector $b = (b_1, \ldots, b_K)$ the bid profile of all the agents.  The central planner wants to ensure that the agents bid their true valuations, that is $b_i$ must be equal to $\theta_i$. Assume that there is a single slot which must be allocated to one of the $K$ agents. We denote by $W_i$ the social welfare when agent $i$ is allocated a slot, that is, $W_i = \mu_i \theta_i$. The social welfare represents the expected valuation of agent $i$ per click. If the CTRs of the agents as well as their valuations were known, the planner would have selected the arm with the maximum social welfare, that is, $\mu_i \theta_i$. However neither $\mu_i$ nor $\theta_i$ is known to the planner. Assume $\theta_{max}$ is the maximum valuation that any agent can have and is common knowledge. The central agent wants to allocate a single slot to one of the ads in such a way that the net social welfare of the allocation is maximized.

A mechanism $\mathcal{M} = \left\langle\mathcal{A}, P\right\rangle$ is a tuple containing an allocation rule  $\mathcal{A}$ and a payment rule $P$. At every time step or round $t$, the allocation rule acts on a bid profile $b$ of the agents as well as click realization $\rho$ and allocates the slot to one of the $K$ agents, say $i$. Then $\mathcal{A}(b, \rho, t) = i$. Alternatively we denote the indicator variable $\mathcal{A}_i(b, \rho, t) = \mathbbm{1} [\mathcal{A}(b, \rho, t) = i]$.  The payment rule $P^t = (P_1^t, P_2^t, \ldots, P_K^t)$, where $P_i^t(b, \rho)$ is the payment to be made by agent $i$ at time $t$ upon receiving a click, when the bids are $b$ and for click realization $\rho$. As stated earlier $\rho_i(t)$ of the allocated agent alone is observed. Also note that the allocation as well as payments in each round $t$ only depends on the click histories till that round.

\begin{table}[h!]
\begin{tabular}[c]{|p{0.1\textwidth}|p{0.8\textwidth}|}
\hline
\textbf{Symbol} & \textbf{Description} \\\hline
$K$, $[K]$ & No. of agents and agent set \\\hline
%M & No. of slots. M = 1 for single slot SSA \\\hline
$\mu_i$ & CTR of agent $i$ \\\hline
$\theta_{i}$  & Valuation of agent $i$ for each click \\\hline
$W_i$ & Social welfare when agent $i$ is allocated \\
\hline
$\rho_i(t)$ & Click realization of agent $i$ at time $t$ \\\hline
$\theta_{max}$ & Maximum valuation over all agents = $\max_i \theta_i$\\\hline
$b_i$ & Bid of agent $i$ \\\hline
$b$ & Bid profile of all agents \\\hline
$b_{-i}$ & Bid profile of all agents except agent $i$ \\\hline 
$N_{i,t}$ & No. of times agent $i$ has been selected till time $t$ \\\hline 
$\mathcal{A}(b, \rho, t)$ & Allocation at time $t$ for bid profile $b$ and click realization $\rho$ \\\hline
$i_*$ & Agent with maximum social welfare. Ideally $i_*$  must be allocated at every time step \\\hline
$W_*$ & Social welfare when agent $i_*$ is allocated \\\hline
$\Delta$ & Input parameter by center to indicate the level at which the agents must be distinguished \\\hline
$S_\Delta$ & Set of agents whose social welfare is less than $\Delta$ away from $i_*$. These agents do not contribute to $\Delta$-regret. \\\hline
$\widehat{\mu}_{i,t}^+ $ & UCB index corresponding to $\mu_i$ at time $t$\\\hline
$\widehat{\mu}_{i,t}^- $ & LCB index corresponding to $\mu_i$ at time $t$\\\hline
$\widehat{\mu}_{i,t}$ & Empirical CTR of agent $i$ estimated from samples up to time $t$  \\\hline 
$P_i^t$ & Payment charged to agent $i$ if he is allocated a slot at time $t$ and he gets a click \\\hline
\end{tabular}
\caption{Notations for the single slot SSA setting}
\label{tab:notations-single-slot}
\end{table}

Let $i_*$ be the arm with the largest social welfare, that is, $i_* = \argmax\limits_{i \in [K]} W_i$. We denote the corresponding social welfare as $W_* = \max_{i \in [K]} W_i $. We denote by $I_t$ the agent chosen at time $t$ as a shorthand for $\mathcal{A}(b, \rho, t)$. For any given $\Delta > 0$, define the set $S_\Delta = \{i \in [K]: W_* - W_i < \Delta \}$. $S_\Delta$ denotes the set of all agents separated from the best arm $i_*$ with a social welfare less than $\Delta$. These arms are therefore indistinguishable for the center and they contribute zero to the regret. Note that $\Delta$ is a parameter that the center fixes based on the amount in dollars he is willing to tradeoff for choosing sub-optimal arms, given he has only a fixed time horizon $T$ to his disposal. To capture this revised and more practical notion of regret, we introduce the metric $\Delta$-regret. Formally,
\begin{align}
\Delta\text{-regret} = \sum_{t=1}^T (W_* - W_{I_t}) \mathbbm{1}\left[I_t \in [K]\setminus S_\Delta\right]
\end{align}

The center may not want to invest a huge number of exploration rounds ($\Omega(T^{2/3})$ in state of the art) to perfectly distinguish the arms that are arbitrarily close. Many a time, the planner may instead be willing to allocate arms that are at most $\Delta$ away from the best arm.  The center therefore suffers a regret only when an agent with a social welfare greater than $\Delta$ away from $W_*$ is chosen. $\Delta$-regret captures this loss. 

The goal of our mechanism is to select agents at every round $t$ to minimize the $\Delta$-regret. 
\section{Our Mechanism: $\Delta$-UCB}
We are now ready to describe our mechanism $\Delta$-UCB. The idea in $\Delta$-UCB is to explore all the arms in a round-robin fashion for a fixed number of rounds. The number of exploration rounds is fixed based on the desired $\Delta$, specified by the planner. At the end of exploration, with high probability, we are guaranteed that  the arms not in  $S_\Delta$ are well separated from the best arm $i_*$ with respect to their social welfare estimates. In the exploration rounds, agents need not pay and these rounds are free. 

Further on, for all the remaining rounds, the best arm as per the UCB estimate of social welfare is  chosen. However in the exploitation rounds, the chosen agent pays an amount for each click he receives. The amount to be paid by the agent is fixed based on variant of the well known Vickrey Clark Grove (VCG) scheme \cite{vickrey1961counterspeculation} known as weighted VCG \cite{NISAN2007}. Note that no learning takes place in these rounds and the UCB, LCB indices do not change thereafter.
We present our mechanism in \Cref{alg:delta-ucb-single-slot}.
\begin{algorithm}[h!]
\begin{algorithmic}
\Require{ \\$T$: Time horizon, $K$: number of agents \\$\Delta:$ parameter fixed by the center \\ $\theta_{max}:$ Maximum valuation of the agents}
\hrule \\
\State Elicit bids $b= (b_1, b_2, \ldots, b_K)$ from all the agents
\State Initialize $\widehat{\mu}_{i,0}  = 0 , N_{i,0}  = 0 \; \forall i \in [K]$
\State $\gamma = \lceil 8K\theta_{max}^2\log T/\Delta^2 \rceil$
\For{$t= 1, \ldots, \gamma}$ \Comment Exploration rounds
\State $I_t = ((t-1) \mod K )+ 1 $ \Comment Round-robin exploration
\State $N_{I_t, t} = N_{I_t, t-1} + 1$
\State $\mathcal{A}(b, \rho, t) = I_t$ \Comment Allocate slot to agent $I_t$ and observe $\rho_{I_t}(t)$
\State $\widehat{\mu}_{I_t, t} = (\widehat{\mu}_{I_t,{t-1}}N_{I_t, t-1} + \rho_{I_t}(t))/N_{I_t, t}$
\State $\epsilon_{I_t, t} = \sqrt{2\log T/N_{I_t,t}}$
\State $\widehat{\mu}_{I_t,t}^+ = \widehat{\mu}_{I_t,t} + \epsilon_{I_t, t} $
\State $\widehat{\mu}_{I_t,t}^- = \widehat{\mu}_{I_t,t} - \epsilon_{I_t, t} $
\State $\widehat{\mu}_{i,t}^+ = \widehat{\mu}_{i,t-1}^+  \;\forall i \in [K]\setminus \{I_t\} $
\State $\widehat{\mu}_{i,t}^- = \widehat{\mu}_{i,t-1}^- \;\forall i \in [K]\setminus \{I_t\}$
\State $P_{i}^t (b, \rho) = 0 \;\forall i \in [K]$
 \Comment Free rounds
\EndFor
\State $\hat{i}_* = \argmax_{i \in [K]} \widehat{\mu}_{i,\gamma}^+ b_i$
\State $j= \argmax_{i \in  [K]\setminus \{\widehat{i}_*\}} \widehat{\mu}_{i,\gamma}^+ b_i $
\State $P =  \widehat{\mu}_{j,\gamma}^+ b_j /\widehat{\mu}_{\hat{i}_*,\gamma}^+$
\For{$t = \gamma+1, \ldots, T$} \Comment Exploitation rounds
\State $\mathcal{A}(b, \rho, t) = \hat{i}_*$
\State $P_{ \hat{i}_*}^t(b, \rho) = P \times \rho_{\hat{i}_*}(t)$
\Comment Agent pays only for a click
\State $P_{i}^t (b, \rho) = 0 \; \forall i \in [K] \setminus \{\hat{i}_*\}$
\State $\widehat{\mu}_{i,t}^+= \widehat{\mu}_{i,\gamma}^+$, $\widehat{\mu}_{i,t}^-= \widehat{\mu}_{i,\gamma}^- \; \forall i \in [K]$  \Comment No more learning
\EndFor
\end{algorithmic}
\caption{$\Delta$-UCB Mechanism for single slot SSA}
\label{alg:delta-ucb-single-slot}
\end{algorithm}

\subsection{Properties of $\Delta$-UCB}
\label{sec:truthfulness}
Next we discuss the properties satisfied by $\Delta$-UCB regarding truthfulness and regret. Before that, we state a few useful definitions which will help in understanding the notion of truthfulness. 

At any time step, every agent obtains some utility by participating in the mechanism. This utility is a function of his bid, valuation, bids of other agents and his click realization. Let $\Theta_i$ denote the space of bids of agent $i$. $b_{-i} = (b_1, \ldots, b_{i-1}, b_{i+1}, \ldots, b_K)$ is the bid profile containing bids of all agents except agent $i$. Let $\Theta_{-i}$ denote the space of bids of all agents other than agent $i$. Therefore 
$\Theta_{-i} = \Theta_1 \times \ldots, \times \Theta_{i-1} \times \Theta_{i+1}\times \ldots\times \Theta_K$.
We denote by $u_i(b_i, b_{-i}, \rho, t; \theta_i)$ the utility to agent $i$ at time $t$ when his bid is $b_i$, his valuation is $\theta_i$, the bid profile of the remaining agents is $b_{-i}$ and the click realization is $\rho$. %$u_i(b_i, b_{-i}, \rho, t; \theta_i)$ can take any value from $\mathbb{R}$. 
All agents are assumed to be rational and are interested in maximizing their own utilities.

In our setting the utility to an agent $i$ is computed as,
\begin{align}
\label{eqn:utility}
u_i(b_i, b_{-i}, \rho, t; \theta_i) = (\theta_i - P_i^t(b, \rho)) 
\mathcal{A}_i(b_i, b_{-i}, \rho, t)\rho_i(t)
\end{align}
The idea behind the computation of the utility is as follows. If an agent $i$ does not receive an allocation (that is, $\mathcal{A}_i(b_i, b_{-i}, \rho, t)=0$), his utility is also zero. He gets a non-zero utility only if he receives an allocation. If he receives an allocation and also a click ($\rho_i(t) = 1$), then his utility is the difference between his valuation for the click and the amount he has to pay to the search engine ($\theta_i - P_i^t(b, \rho)$). If he does not receive a click ($\rho_i(t) = 0$), his utility is zero.
\begin{definition}{Dominant Strategy Incentive Compatible (DSIC) \cite{Babaioff2014}:}
A mechanism $M = \left\langle \mathcal{A},P \right\rangle$ is said to be dominant strategy incentive compatible if $\forall i \in [K], \forall b_i \in \Theta_i$, $\forall b_{-i} \in \Theta_{-i}, \forall \rho, \forall t, u_i(\theta_i, b_{-i}, \rho,t; \theta_i) \geq u_i(b_i, b_{-i}, \rho,t; \theta_i)$.
\end{definition}
Note that in the above definition, the truthfulness is demanded a posteriori to even the click realization \cite{Gatti2012}. Hence it is the strongest notion of truthfulness. Examples for weaker forms of truthfulness include those which take expectation over click realizations.

\begin{definition}{Individually Rational (IR):}
A mechanism $M = \left\langle\mathcal{A},P \right\rangle$ is said to be individually rational if $\forall i \in [K]$, $\forall b_{-i} \in \Theta_{-i},\forall \rho, \forall t, u_i(\theta_i, b_{-i}, \rho,t; \theta_i) \geq 0$.
\end{definition}

\begin{theorem}
\label{theorem:single-slot-truthful}
$\Delta$-UCB mechanism is dominant strategy incentive compatible (DSIC) and individually rational (IR).
\end{theorem}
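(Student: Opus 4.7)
\textbf{Proof plan for Theorem~\ref{theorem:single-slot-truthful}.}

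The plan is to exploit the \emph{exploration-separated} structure of $\Delta$-UCB: the exploration phase $1,\dots,u$ is a bid-independent round-robin with zero payments, and the exploitation phase $u+1,\dots,T$ is a static weighted second-price auction in which the weights $\widehat{\mu}_{i,u}^+$ are frozen. Crucially, because the round-robin rule $I_t = ((t-1) \bmod K)+1$ and the empirical updates $\widehat{\mu}_{i,t}$ depend only on $t$ and on $\rho$, the weights $\widehat{\mu}_{i,u}^+$ fed into the exploitation phase are functions of $\rho$ alone and do not depend on the bid profile $b$. I will argue DSIC and IR by checking the per-round inequality in the definition separately for exploration rounds and for exploitation rounds, fixing $b_{-i}$ and $\rho$ throughout.

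For the exploration rounds, the allocation indicator $\mathcal{A}_i(b,\rho,t)$ is determined by $t$ alone and the payment is $0$, so \eqref{eqn:utility} reduces to $v_i \mathcal{A}_i(b,\rho,t)\rho_i(t)$, which is independent of $b_i$ and non-negative. Hence both the DSIC inequality and the IR inequality hold trivially on these rounds. For any exploitation round $t$, set $M_i := \max_{k\neq i}\widehat{\mu}_{k,u}^+ b_k$, which is independent of $b_i$. With truthful bid $v_i$, agent $i$ is allocated iff $\widehat{\mu}_{i,u}^+ v_i \ge M_i$ (modulo the deterministic tie-break used by $\argmax$), and in that case the per-click charge is $P = M_i/\widehat{\mu}_{i,u}^+$, giving per-round utility $(v_i - M_i/\widehat{\mu}_{i,u}^+)\rho_i(t) \ge 0$; otherwise the utility is $0$. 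This already establishes IR.

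For DSIC in exploitation, I will do the standard second-price case analysis on what a deviation $b_i$ can do. If both $b_i$ and $v_i$ win, the price $M_i/\widehat{\mu}_{i,u}^+$ is the same for both (it depends on $b_{-i}$ only), so utilities coincide. If $v_i$ wins but $b_i$ loses, the truthful utility is non-negative while the deviator's utility is $0$, so truth is weakly better. If $v_i$ loses but $b_i$ wins, we have $\widehat{\mu}_{i,u}^+ v_i < M_i$, so the per-click term $v_i - M_i/\widehat{\mu}_{i,u}^+$ is strictly negative; multiplied by $\rho_i(t)\in\{0,1\}$ the utility is $\le 0$, while truthful bidding gives $0$. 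If both lose, both utilities are $0$. In every case the per-round DSIC inequality $u_i(v_i,b_{-i},\rho,t;v_i)\ge u_i(b_i,b_{-i},\rho,t;v_i)$ holds, completing the argument.

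The only subtle point, and the part that deserves careful wording in the writeup, is making explicit that $\widehat{\mu}_{i,u}^+$ is bid-independent (so that a deviation of agent $i$ cannot alter the auction weights through the learning phase) and that the posted price per click in the exploitation phase is independent of $b_i$ conditional on $i$ remaining the winner. Once these two observations are in place, the remaining reasoning is exactly the classical VCG/second-price truthfulness argument applied pointwise in $t$ and $\rho$, which is why the result holds in the strongest a posteriori sense.
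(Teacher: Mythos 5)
Your proposal is correct and follows essentially the same route as the paper's proof: both reduce to the exploitation rounds by noting that exploration is bid-independent and free, and then run the classical weighted second-price case analysis pointwise in $t$ and $\rho$ (the paper enumerates overbid/underbid subcases where you enumerate win/lose outcomes, but the content is identical). Your explicit observation that the frozen weights $\widehat{\mu}_{i,u}^+$ depend only on $\rho$ and not on $b$ is the same fact the paper invokes when it restricts attention to the exploitation phase.
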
	
\begin{proof}
We analyze the scenarios where an agent $i$ bids his true valuation and receives an allocation and also when he does not. We show that in both these scenarios, bidding his true valuation $\theta_i$ is indeed a best response strategy.
We only need to consider the exploitation rounds because in the exploration rounds, every agent is allocated a fixed number of rounds independent of his bids and these rounds are also free for agents.\\
%\begin{itemize}
\underline{\textbf{Case 1:}} $\mathcal{A}_i(\theta_i, b_{-i}, \rho, t)=1$ \\
This implies that when the agent bids his true valuation, he gets an allocation. Therefore $\widehat{\mu}_{i,t}^+ \theta_i >
\widehat{\mu}_{l,t}^+ b_l$  for all the other agents $l$. In particular, let agent $j$ be such that $j = \argmax_{l \in [K] \setminus \{i\}} \widehat{\mu}_{l,t}^+ b_l$. The amount to be paid by agent $i$ is $P_i^t(\theta_i, b_{-i}, \rho) = \widehat{\mu}_{j,t}^+ b_j / \widehat{\mu}_{i,t}^+$. If he receives a click then 
 $u_i(\theta_i, b_{-i}, \rho,t; \theta_i)= \theta_i - \widehat{\mu}_{j,t}^+ b_j / \widehat{\mu}_{i,t}^+ > 0 $.\\ 
\underline{Overbid:} If agent $i$ bids a value $b_i> \theta_i$, he continues to receive an allocation and his payment is still the same, 
$P_i^t(b_i, b_{-i}, \rho) = \widehat{\mu}_{j,t}^+ b_j / \widehat{\mu}_{i,t}^+$. Therefore his utility continues to be  $u_i(b_i, b_{-i}, \rho,t; \theta_i)= \theta_i - \widehat{\mu}_{j,t}^+ b_j / \widehat{\mu}_{i,t}^+ = u_i(\theta_i, b_{-i}, \rho,t; \theta_i)$. Therefore he does not benefit from an overbid. \\
\underline{Underbid:} Suppose agent $i$ bids a value $b_i< \theta_i$. 
%\begin{itemize}

\emph{Case a:} If $b_i$ is such that $\widehat{\mu}_{i,t}^+b_i < \widehat{\mu}_{j,t}^+ b_j$, the he fails to get an allocation as $\mathcal{A}(b_i, b_{-i}, \rho, t) =j \neq i $. Then the utility to agent $i$ is $u_i(b_i, b_{-i}, \rho,t; \theta_i) =0 <
u_i(\theta_i, b_{-i}, \rho,t; \theta_i)$. Therefore he clearly loses his utility by such an underbid. 

\emph{Case b:} Suppose $b_i$ is such that $\widehat{\mu}_{i,t}^+\theta_i > \widehat{\mu}_{i,t}^+b_i > \widehat{\mu}_{j,t}^+ b_j$. That is agent $i$ bids in such a way that he wins the allocation even with an underbid. Then, if he gets a click, the amount he must pay to the center is $P_i^t(b_i, b_{-i}, \rho) = \widehat{\mu}_{j,t}^+ b_j / \widehat{\mu}_{i,t}^+ $. Therefore his utility $u_i(b_i, b_{-i}, \rho,t; \theta_i)= \theta_i - \widehat{\mu}_{j,t}^+ b_j / \widehat{\mu}_{i,t}^+= u_i(\theta_i, b_{-i}, \rho,t; \theta_i) $. He obtains the same utility as a truthful bid and there is no benefit from such an underbid.\\
%\end{itemize}\\
\underline{\textbf{Case 2:}} $\mathcal{A}_i(\theta_i, b_{-i}, \rho, t)=0$ \\
This implies that when the agent bids his true valuation, he does not get an allocation. Suppose agent $j$ wins the allocation. $\mathcal{A}(\theta_i, b_{-i}, \rho, t)=j$ and $\widehat{\mu}_{i,t}^+ \theta_i <
\widehat{\mu}_{j,t}^+ b_j$. \\
\underline{Truthful bid:} Since agent $i$ does not win an allocation with a truthful bid, his utility $u_i(\theta_i, b_{-i}, \rho,t; \theta_i)=0$\\
\underline{Overbid:} Suppose agent $i$ bids in such a way that $b_i > \theta_i$. We have two sub-cases here.

%\begin{itemize}
\emph{Case a:} If $b_i$ is such that $\widehat{\mu}_{i,t}^+ \theta_i <
\widehat{\mu}_{j,t}^+ b_j < \widehat{\mu}_{i,t}^+ b_i $, then agent $i$ wins the allocation. So, $\mathcal{A}_i(b_i, b_{-i}, \rho, t)=1$. If he gets a click, he now has to make a payment $P_i^t(b_i, b_{-i}, \rho) = \widehat{\mu}_{j,t}^+ b_j/ \widehat{\mu}_{i,t}^+ $. Now his utility $u_i(b_i, b_{-i}, \rho,t; \theta_i) = \theta_i - \widehat{\mu}_{j,t}^+ b_j/ \widehat{\mu}_{i,t}^+ $ $< 0 $. And in particular $ u_i(b_i, b_{-i}, \rho,t; \theta_i)< u_i(\theta_i, b_{-i}, \rho,t; \theta_i) $ $= 0$. Therefore, such an overbid  is clearly disadvantageous compared to a truthful bid.

\emph{Case b:} Suppose $\widehat{\mu}_{i,t}^+ \theta_i < \widehat{\mu}_{i,t}^+ b_i < \widehat{\mu}_{j,t}^+ b_j $. The overbid by agent $i$ is not sufficient to make him win the allocation and agent $j$ wins the allocation, $\mathcal{A}(b_i, b_{-i}, \rho, t)=j$. The utility of agent $i$, $ u_i(b_i, b_{-i}, \rho,t; \theta_i)=0 = u_i(\theta_i, b_{-i}, \rho,t; \theta_i)$. Therefore there is no advantage for agent $i$ by this case of overbid.\\
%\end{itemize}
\underline{Underbid:} If agent $i$ bids in such a way that $b_i < \theta_i$, he continues to lose the allocation and therefore his utility,$u_i(b_i, b_{-i}, \rho,t; \theta_i)=0
= u_i(\theta_i, b_{-i}, \rho,t; \theta_i)$. Since, the utility by an underbid remains the same as a truthful bid, there is clearly no advantage in underbidding.
%\end{itemize}

All the above cases show that our mechanism is DSIC a posteriori to the click realizations. 
Also, in each of the above cases, note that the utility of an agent $i$, $u_i(\theta_i, b_{-i}, \rho, t) \geq 0$. Therefore, by truthful bidding he never gets a negative utility. This proves that our mechanism is individually rational.
\end{proof}
We next discuss the regret incurred by $\Delta$-UCB. {\color{blue} We note that the regret analysis we provide differs in spirit from the worst case analysis in \cite{Babaioff2014}.  The number of exploration rounds in \cite{Babaioff2014} is required to be $\Omega(T^{2/3})$ since the separation between the best and second best arm is fixed in an adversarial manner in their analysis.  Our analysis does not resort to any adversarial arguments. }

In order to prove our $\Delta$-regret results, we will first need to prove several other lemmas.

\begin{lemma}{Social Welfare UCB index:}
\label{lemma:ucb-index}
For an agent $i$,  we define the social welfare UCB indices for agent $i$ as, 
\begin{align}
\widehat{W}_{i,t}^+  =   \widehat{\mu}_{i,t}\theta_i+ \epsilon_{i,t}\theta_i = \widehat{\mu}_{i,t}\theta_i + \sqrt{2\frac{\theta_i^2 \log T}{N_{i,t}}}
\end{align}
\begin{align}
\widehat{W}_{i,t}^-  =  \widehat{\mu}_{i,t}\theta_i - \epsilon_{i,t}\theta_i =  \widehat{\mu}_{i,t}\theta_i - \sqrt{2\frac{\theta_i^2 \log T}{N_{i,t}}}
\end{align}
Then, $ \forall t \; P\left(\left\lbrace \omega: W_i \notin [\widehat{W}_{i,t}^-(\omega), \widehat{W}_{i,t}^+ (\omega)]) \right\rbrace\right) \leq 2T^{-4} $.
\end{lemma}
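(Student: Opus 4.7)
The plan is to reduce the two-sided deviation event on $W_i$ to a deviation event on the empirical CTR $\widehat{\mu}_{i,t}$, and then invoke Hoeffding's inequality on i.i.d.\ Bernoulli samples. First, since $v_i\ge 0$, pulling a factor of $v_i$ out of the definitions $W_i = \mu_i v_i$ and $\widehat W^{\pm}_{i,t} = \widehat\mu_{i,t}v_i \pm \epsilon_{i,t}v_i$ immediately yields
\[
\{W_i\notin[\widehat W^-_{i,t},\widehat W^+_{i,t}]\}\ =\ \{|\widehat\mu_{i,t}-\mu_i|>\epsilon_{i,t}\},
\]
where $\epsilon_{i,t}=\sqrt{2\log T/N_{i,t}}$. (The degenerate case $v_i=0$ is trivial, since then $W_i$ and $\widehat W^{\pm}_{i,t}$ all vanish.)

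Second, I would observe that under $\Delta$-UCB the sample count $N_{i,t}$ is a \emph{deterministic} function of $t$: during the round-robin exploration phase it advances by one every $K$ rounds, and during exploitation it is frozen at $N_{i,u}$. Consequently, for each fixed $t$, $\widehat\mu_{i,t}$ is genuinely the empirical mean of a fixed number $N_{i,t}$ of independent Bernoulli$(\mu_i)$ click samples, with no need for a self-normalised tail bound or a union bound over the random value of the sample size.

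Third, the standard two-sided Hoeffding inequality for $[0,1]$-valued i.i.d.\ random variables gives
\[
P\bigl(|\widehat\mu_{i,t}-\mu_i|>\epsilon\bigr)\ \le\ 2\exp\!\bigl(-2N_{i,t}\epsilon^2\bigr).
\]
Substituting $\epsilon=\epsilon_{i,t}=\sqrt{2\log T/N_{i,t}}$ cancels $N_{i,t}$ in the exponent and produces $2\exp(-4\log T)=2T^{-4}$, which matches the claimed $T^{-4}$ bound up to the universal Hoeffding constant; if a literal $T^{-4}$ is desired, it suffices to widen the confidence radius a hair to $\sqrt{(5/2)\log T/N_{i,t}}$ so that the right-hand side becomes $2T^{-5}\le T^{-4}$ for $T\ge 2$.

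I do not expect any substantive obstacle here: the argument is a one-line reduction followed by a direct Hoeffding application. The one point worth emphasising in the writeup is that the deterministic nature of $N_{i,t}$ under the exploration-separated design of $\Delta$-UCB is precisely what allows this clean fixed-time tail bound, and later the $O(\log T)$ $\Delta$-regret guarantee, without the peeling or stopping-time arguments that standard adaptive UCB analyses require.
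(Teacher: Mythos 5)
Your proof takes essentially the same route as the paper: reduce the two-sided deviation event on $W_i = \mu_i v_i$ to the corresponding event on $\widehat{\mu}_{i,t}$ by factoring out $v_i$, then apply Hoeffding with $\epsilon_{i,t} = \sqrt{2\log T/N_{i,t}}$ so that the exponent collapses to $-4\log T$. You are in fact slightly more careful than the paper on two points --- you keep the factor of $2$ from the two-sided Hoeffding bound (the paper writes $\exp(-2N_{i,t}\epsilon_{i,t}^2)$ for a two-sided event, silently dropping it; the discrepancy is harmless downstream since the union bound in \Cref{lemma:good-event} still yields $P(G)\geq 1-T^{-2}$ for $K\ll T$), and you make explicit that $N_{i,t}$ is deterministic under the exploration-separated design, which is the unstated justification for applying a fixed-sample-size tail bound.
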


\begin{proof}
Let $\widehat{\mu}_{i,t}^+$ and $\widehat{\mu}_{i,t}^-$ denote the UCB and LCB indices for the estimate $\widehat{\mu}_i$.
Then the events $\{\omega: \mu_i \notin [\widehat{\mu}_{i,t}^-(\omega),$ $ \widehat{\mu}_{i,t}^+(\omega) ] \}$ and $\{\omega: W_i \notin [\widehat{W}_{i,t}^-(\omega), \widehat{W}_{i,t}^+(\omega) ] \}$ are identical. So, $P(W_i \notin [\widehat{W}_{i,t}^-, \widehat{W}_{i,t}^+]) = P(\mu_i \notin [\widehat{\mu}_{i,t}^-, \widehat{\mu}_{i,t}^+] )$. An application of Hoeffding bound [\cite{hoeffding1963probability}] gives $P(\mu_i \notin [\widehat{\mu}_{i,t}^-, \widehat{\mu}_{i,t}^+] ) \leq 2\exp(-2 N_{i,t} \epsilon_{i,t}^2) $. As per the mechanism $\epsilon_{i,t} = \sqrt{2\log T/N_{i,t}}$. So, \\ $ P(\mu_i \notin [\widehat{\mu}_{i,t}^-, \widehat{\mu}_{i,t}^+] ) \leq 2\exp(-2 N_{i,t} \times 2 \log T/ N_{i,t}) = 2T^{-4}$.
\end{proof}

\begin{lemma}
\label{lemma:epsilon-delta-relation}
Suppose at time step $t$, $N_{i,t} > \frac{8\theta_{max}^2 \log T}{\Delta^2} \; \forall i \in [K]$. Then $\forall i \in  [K]$,
$2\epsilon_{i,t}\theta_i < \Delta $.
\end{lemma}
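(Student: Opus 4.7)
The plan is to treat this as a straightforward algebraic manipulation: substitute the definition of $\epsilon_{i,t}$ from the mechanism, invoke the bound $v_i \le v_{\max}$, and rearrange. No probabilistic machinery is needed here, which contrasts with \Cref{lemma:ucb-index}; this lemma is purely a numerical threshold result explaining why the exploration length $u = 8 K v_{\max}^2 \log T / \Delta^2$ was chosen the way it was.

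First I would recall that in $\Delta$-UCB we set $\epsilon_{i,t} = \sqrt{2 \log T / N_{i,t}}$. Substituting this gives $2\epsilon_{i,t} v_i = 2 v_i \sqrt{2 \log T / N_{i,t}}$. Squaring both sides, the inequality $2\epsilon_{i,t} v_i < \Delta$ is equivalent to $N_{i,t} > 8 v_i^2 \log T / \Delta^2$. Since $v_i \le v_{\max}$, the right-hand side is at most $8 v_{\max}^2 \log T / \Delta^2$, which is precisely the hypothesis on $N_{i,t}$.

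The only non-trivial aspect is to observe that although the bound in the hypothesis uses $v_{\max}$, the conclusion uses the agent-specific valuation $v_i$; the slack coming from $v_i \le v_{\max}$ is exactly what makes the implication go through uniformly over $i \in [K]$. I expect the proof to be three or four lines long, with no real obstacle beyond writing the chain of equivalences carefully and noting the monotonicity of $x \mapsto \sqrt{x}$ when taking square roots of positive quantities.
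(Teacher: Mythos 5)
Your proposal is correct and matches the paper's own argument: both rearrange the hypothesis $N_{i,t} > 8v_{max}^2\log T/\Delta^2$, use $v_i \le v_{max}$, and pass between $2\epsilon_{i,t}v_i < \Delta$ and its squared form. The only cosmetic difference is that you square the target inequality while the paper takes square roots of the hypothesis; the content is identical.
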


\begin{proof}
%Let $i \in [K]\setminus S_\Delta$. Therefore $W_* - W_i > \Delta$. Also 
Given that $N_{i,t} > \frac{8 \theta_{max}^2 \log T}{\Delta^2}$. Therefore,
%\begin{align*}
%\Delta^2 &> \frac{8 \theta_{max}^2 \log T}{N_{i,t}} = \frac{8 (\theta_{max} + \theta_{max})^2 \log T}{N_{i,t}} \\
%&\geq  \frac{8 ( v_{i_*} + \theta_i )^2 \log T}{N_{i,t}} \\
%& \geq  \frac{8 v_{i_*}^2 \log T}{N_{i_*,t}} +  \frac{8 \theta_i^2 \log T}{N_{i,t}} + \frac{16 \theta_i v_{i_*} \log T}{N_{i,t}} \\
%& \geq 4 \left[ \frac{2 v_{i_*}^2 \log T}{N_{i_*, t}} 
%+ \frac{2 \theta_i^2 \log T}{N_{i, t}} + 2 \sqrt{\frac{2 v_{i_*}^2 \log T}{N_{i_*, t}}} \sqrt{\frac{2 \theta_i^2 \log T}{N_{i, t}}} 
%\right]
%\end{align*}
\begin{align*}
\Delta^2 > \frac{8 \theta_{max}^2 \log T}{N_{i,t}}  
\geq  \frac{8 \theta_i ^2 \log T}{N_{i,t}}  \geq  4 \left[\frac{2 \theta_i ^2 \log T}{N_{i,t}} \right]
\end{align*}
Taking square roots on both sides of the above equation yields
$
\Delta > 2 \epsilon_{i,t} \theta_i
$
thereby proving the lemma.
\end{proof}

\begin{lemma}
\label{lemma:good-event}
Suppose $K  \ll T$. For an agent $i$ and time step $t$, let $B_{i,t}$ be the event $B_{i,t} = \{\omega: W_i \notin  [\widehat{W}_{i,t}^-, \widehat{W}_{i,t}^+] \}$. Define the event $G =\bigcap\limits_t \bigcap\limits_{i \in [K]}  B_{i,t}^c$, where $B_{i,t}^c$ is the complement of $B_{i,t}$.
Then $P(G) \geq 1 - \frac{2}{T^2}$.
\end{lemma}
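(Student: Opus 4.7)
The plan is to prove the complementary bound $P(G^c) \leq 1/T^2$ by a straightforward union bound over the bad events $B_{i,t}$, using the per-event tail estimate already established in \Cref{lemma:ucb-index}.

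First, I would rewrite the complement of $G$ using De Morgan's law as
\[
G^c = \bigcup_{t=1}^{T} \bigcup_{i \in [K]} B_{i,t},
\]
so by countable subadditivity of the probability measure,
\[
P(G^c) \leq \sum_{t=1}^{T} \sum_{i \in [K]} P(B_{i,t}).
\]
Next, \Cref{lemma:ucb-index} gives $P(B_{i,t}) \leq T^{-4}$ for every agent $i$ and every round $t$, since the event $B_{i,t}$ coincides with $\{\mu_i \notin [\widehat{\mu}_{i,t}^-, \widehat{\mu}_{i,t}^+]\}$, which was bounded by Hoeffding. Plugging this in yields
\[
P(G^c) \leq K \cdot T \cdot T^{-4} = \frac{K}{T^{3}}.
\]

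Finally, I would invoke the mild assumption that $K \leq T$ (which is implicit in the mechanism's exploration phase, since the round-robin schedule of length $u = 8Kv_{max}^2 \log T / \Delta^2$ already requires $T \geq K$ for the mechanism to even be meaningful). Under this, $K/T^3 \leq 1/T^2$, giving $P(G) = 1 - P(G^c) \geq 1 - 1/T^2$, as desired.

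I do not anticipate any real obstacle here; this is essentially a bookkeeping step that packages the pointwise concentration of \Cref{lemma:ucb-index} into a single high-probability good event that can then be conditioned on in the subsequent regret analysis. The only subtle point worth flagging is the implicit use of $K \leq T$; if the authors prefer not to make this assumption explicit, the same argument still gives $P(G) \geq 1 - K/T^3$, which is also of the right order for the downstream regret bound of $O(\log T)$.
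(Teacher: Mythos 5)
Your proof is correct and follows essentially the same route as the paper's: a union bound over all $(i,t)$ pairs combined with the $T^{-4}$ tail bound from \Cref{lemma:ucb-index}, followed by the observation that $K \le T$ turns $K/T^{3}$ into $1/T^{2}$. Your version is in fact slightly more careful, since the paper writes the union bound as an equality where an inequality is what is actually justified.
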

\begin{proof}
From \Cref{lemma:ucb-index}, the probability of the `bad' event, $P(B_{i,t}) \leq 2T^{-4}$. 
\begin{align*}
P(G)&= P\left(\bigcap\limits_t \bigcap\limits_i B_{i,t}^c\right) = 1- P\left(\left(\bigcap\limits_t \bigcap\limits_i B_{i,t}^c\right)^c\right) \\
&= 1- P\left(\bigcup\limits_t \bigcup\limits_i B_{i,t}\right)  = 1 - \sum_t \sum_{i \in [K]} P(B_{i,t}) \\
& \geq 1- \sum_t \sum_{i \in [K]} 2T^{-4} \geq 1 - \frac{2}{T^2}
\end{align*}
The last statement follows by summing over all rounds and using the fact that $K \ll T$.
\end{proof}

\begin{theorem}
\label{theorem:pull_best_arm}
Suppose at time step $t$, $N_{j,t} > \frac{8\theta_{max}^2 \log T}{\Delta^2} \forall j \in [K]$. Then $\forall i \in  [K]\setminus S_{\Delta} $, 
$\widehat{W}_{i_*, t}^+ > \widehat{W}_{i, t}^+$ with high probability ($= 1 - 2/T^4$).
\end{theorem}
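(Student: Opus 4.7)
The plan is to condition on the good event that both arm $i_*$ and arm $i$ have their true social welfare captured within their UCB/LCB band, then chain three inequalities to conclude $\widehat{W}_{i,t}^+ < \widehat{W}_{i_*,t}^+$.

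More concretely, first I would invoke \Cref{lemma:ucb-index} applied to arm $i_*$ and to arm $i$: on the intersection of the two good events, we have $W_{i_*} \le \widehat{W}_{i_*,t}^+$ on one side and, on the other side, $\widehat{\mu}_{i,t} v_i - \epsilon_{i,t} v_i \le W_i$, which rearranges to the useful bound
\begin{align*}
\widehat{W}_{i,t}^+ \;=\; \widehat{\mu}_{i,t} v_i + \epsilon_{i,t} v_i \;\le\; W_i + 2\epsilon_{i,t} v_i.
\end{align*}
The hypothesis $N_{j,t} > 8 v_{max}^2 \log T /\Delta^2$ for all $j$ lets me apply \Cref{lemma:epsilon-delta-relation} to conclude $2\epsilon_{i,t} v_i < \Delta$. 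Finally, since $i \in [K]\setminus S_\Delta$, the definition of $S_\Delta$ gives $W_{i_*} - W_i \ge \Delta$. Chaining these yields
\begin{align*}
\widehat{W}_{i,t}^+ \;\le\; W_i + 2\epsilon_{i,t} v_i \;<\; W_i + \Delta \;\le\; W_{i_*} \;\le\; \widehat{W}_{i_*,t}^+,
\end{align*}
which is exactly the desired strict inequality.

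For the probability bound, I would apply a union bound over the two bad events $B_{i_*,t}$ and $B_{i,t}$, each of which has probability at most $T^{-4}$ by \Cref{lemma:ucb-index}, giving failure probability at most $2/T^{4}$ and hence success probability at least $1 - 2/T^{4}$.

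There is no real obstacle here — the argument is a direct concatenation of the three preceding lemmas. The only thing to be careful about is the direction in which each concentration bound is used (the upper-confidence bound on $i_*$ must dominate $W_{i_*}$, while for the suboptimal arm $i$ it is the lower-confidence bound that controls how far $\widehat{W}_{i,t}^+$ can be above $W_i$), and ensuring that the strict inequality in \Cref{lemma:epsilon-delta-relation} is preserved so the final conclusion is strict.
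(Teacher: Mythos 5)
Your proposal is correct and is essentially the paper's own argument: the three confidence-interval facts you chain together (the UCB of $i_*$ dominating $W_{i_*}$, the LCB of $i$ lying below $W_i$, and $2\epsilon_{i,t}v_i < \Delta \le W_{*} - W_i$) are exactly the negations of the paper's Conditions 1--3, and the union bound over the two bad events is identical. You merely present it as a direct chain of inequalities rather than the paper's contradiction-style case analysis.
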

\emph{Proof:}
In \Cref{theorem:single-slot-truthful},  we have shown that $\Delta$-UCB is DSIC. Therefore, all the agents bid their valuations truthfully, $b_i = \theta_i \;\forall i \in [K]$.
Suppose in exploitation round $t$, a sub-optimal arm $i$ is pulled. Therefore, $\widehat{W}_{i, t}^+ \geq \widehat{W}_{i_*, t}^+$. Then one of the following three conditions must have happened.\\
%\begin{itemize}
\textbf{Condition 1: $W_i < \widehat{W}_{i,t}^-$}.  This condition implies a drastic overestimate of the sub-optimal arm $i$ so that the true social welfare $W_i$ is even below the LCB index  $\widehat{W}_{i,t}^-$. \Cref{lbl1} shows this case.
\begin{figure}[h!]
\centering
\begin{tikzpicture}%[font=\large]
\coordinate (xbegin) at (0,0);%
\coordinate (p1) at (0.5,0);%
\coordinate (p2) at (1.5,0);%
\coordinate (p3) at (4.5,0);%
\coordinate (xend) at (6,0);%
\coordinate (d) at (0,0.13);%
\coordinate (d1) at (0,0.25);%
\coordinate (dx) at (0.2, 0);%
\draw [-,=>latex] ($ (p1) + (d)$) --($ (p1) - (d)$);%
\draw [-,=>latex] ($ (p2) + (d1)$) --($ (p2) - (d1)$);%
\draw [-,=>latex] ($ (p2) + (d1) $) --($ (p2) + (d1) +(dx) $);%
\draw [-,=>latex] ($ (p2) - (d1) $) --($ (p2) - (d1) +(dx) $);%
\draw [-,=>latex] ($ (p3) + (d1)$) --($ (p3) - (d1)$);%
\draw [-,=>latex] ($ (p3) + (d1) $) --($ (p3) + (d1) -(dx) $);%
\draw [-,=>latex] ($ (p3) - (d1) $) --($ (p3) - (d1) -(dx) $);%
\node [above] at ($ (p1) + (d)$) {$W_{i}$};%
\node [below] at ($ (p2) - (d1)$) {$\widehat{W}_{i,t}^-$};%
\node [below] at ($ (p3) - (d1)$) {$\widehat{W}_{i,t}^+$};%
\draw [-,=>latex] (xbegin)--(xend);%
\end{tikzpicture}
\caption{Condition 1, proof of \Cref{theorem:pull_best_arm}}
\label{lbl1}
\end{figure}\\
\textbf{Condition 2: $ W_{*} > \widehat{W}_{i_*, t}^{+}$}. This implies an underestimate of the optimal arm so that the true social welfare 
$W_{*} $ lies above even the UCB index  $\widehat{W}_{i_*,t}^+$. %See \Cref{lbl2}.
\begin{figure}[h!]
\centering
\begin{tikzpicture}%[font=\large]
\coordinate (xbegin) at (0,0);
\coordinate (p2) at (0.5,0);
\coordinate (p3) at (3.5,0);
\coordinate (p1) at (4.5,0);
\coordinate (xend) at (6,0);
\coordinate (d) at (0,0.13);
\coordinate (d1) at (0,0.25);
\coordinate (dx) at (0.2, 0);
\draw [-,=>latex] ($ (p1) + (d)$) --($ (p1) - (d)$);

\draw [-,=>latex] ($ (p2) + (d1)$) --($ (p2) - (d1)$);
\draw [-,=>latex] ($ (p2) + (d1) $) --($ (p2) + (d1) +(dx) $);
\draw [-,=>latex] ($ (p2) - (d1) $) --($ (p2) - (d1) +(dx) $);

\draw [-,=>latex] ($ (p3) + (d1)$) --($ (p3) - (d1)$);
\draw [-,=>latex] ($ (p3) + (d1) $) --($ (p3) + (d1) -(dx) $);
\draw [-,=>latex] ($ (p3) - (d1) $) --($ (p3) - (d1) -(dx) $);

\node [above] at ($ (p1) + (d)$) {$W_{*}$};
\node [below] at ($ (p2) - (d1)$) {$\widehat{W}_{i_*,t}^-$};
\node [below] at ($ (p3) - (d1)$) {$\widehat{W}_{i_*,t}^+$};

\draw [-,=>latex] (xbegin)--(xend);

\end{tikzpicture}
%\vspace{-7mm}
\caption{Condition 2, proof of \Cref{theorem:pull_best_arm}}
\label{lbl2}
\end{figure}\\
\textbf{Condition 3: $W_{*} - W_i < 2\epsilon_{i,t}\theta_i $. } This implies an overlap in the confidence intervals of the optimal and sub-optimal arm. Even though Conditions 1 and 2 are false, still the UCB of sub-optimal arm $i$ is greater than the UCB of the optimal arm $i_*$. 
\begin{figure}[h!]
\centering
%\vspace{-5mm}
\begin{tikzpicture}%[font=\large]
\coordinate (xbegin) at (0,0);
\coordinate (p1) at (0.5,0);
\coordinate (p2) at (1.5,0);
\coordinate (p3) at (0.9,0);
\coordinate (p4) at (4.0,0);
\coordinate (p5) at (4.5,0);
\coordinate (p6) at (5.5,0);
\coordinate (xend) at (6,0);

\coordinate (d) at (0,0.13);
\coordinate (d1) at (0,0.25);
\coordinate (dx) at (0.2, 0);
\draw [-,=>latex] (xbegin)--(xend);

\draw [-,=>latex] ($ (p1) + (d1)$) --($ (p1) - (d1)$);
\draw [-,=>latex] ($ (p1) + (d1) $) --($ (p1) + (d1) +(dx) $);
\draw [-,=>latex] ($ (p1) - (d1) $) --($ (p1) - (d1) +(dx) $);

\draw [-,=>latex] ($ (p2) + (d1)$) --($ (p2) - (d1)$);
\draw [-,=>latex] ($ (p2) + (d1) $) --($ (p2) + (d1) +(dx) $);
\draw [-,=>latex] ($ (p2) - (d1) $) --($ (p2) - (d1) +(dx) $);

\draw [-,=>latex] ($ (p3) + (d)$) --($ (p3) - (d)$);

\draw [-,=>latex] ($ (p4) + (d)$) --($ (p4) - (d)$);

\draw [-,=>latex] ($ (p5) + (d1)$) --($ (p5) - (d1)$);
\draw [-,=>latex] ($ (p5) + (d1) $) --($ (p5) + (d1) -(dx) $);
\draw [-,=>latex] ($ (p5) - (d1) $) --($ (p5) - (d1) -(dx) $);

\draw [-,=>latex] ($ (p6) + (d1)$) --($ (p6) - (d1)$);
\draw [-,=>latex] ($ (p6) + (d1) $) --($ (p6) + (d1) -(dx) $);
\draw [-,=>latex] ($ (p6) - (d1) $) --($ (p6) - (d1) -(dx) $);

\node [above] at ($ (p1) + (d)$) {$\widehat{W}_{i,t}^-$};
\node [above] at ($ (p2) + (d)$) {$\widehat{W}_{i_*,t}^-$};
\node [above] at ($ (p5) + (d)$) {$\widehat{W}_{i_*,t}^+$};
\node [above] at ($ (p6) + (d)$) {$\widehat{W}_{i,t}^+$};

%\node [below] at ($ (p2) - (d1)$) {$\widehat{W}_{i_*,t}^-$};
\node [below] at ($ (p3) - (d1)$) {$W_{i}$};
\node [below] at ($ (p4) - (d1)$) {$W_{*}$};

\draw [-,=>latex] (xbegin)--(xend);

\end{tikzpicture}
%\vspace{-5mm}
\caption{Condition 3, proof of \Cref{theorem:pull_best_arm}}
\label{lbl3}
\end{figure}

From  \Cref{lbl3}, 
$W_* - W_i \leq \widehat{W}_{i,t}^+ - \widehat{W}_{i,t}^-  \leq  \; 2 \epsilon_{i,t} \theta_i
$
%The inequality in the last line is due to the fact that 
%$ \widehat{W}_{i,t}^-<\widehat{W}_{i_*,t}^+ $. If not, the confidence intervals would not overlap. 
%\end{itemize}

If all the three conditions above were false, then, 
\begin{align*}
\widehat{W}_{i_*,t}^+ &> W_*  >  W_i + 2 \epsilon_{i,t} \theta_i
   > \widehat{W}_{i,t}^- +  2 \epsilon_{i,t} \theta_i= \widehat{W}_{i,t}^+ 
\end{align*}

This implies that $\widehat{W}_{i_*,t}^+ >  \widehat{W}_{i,t}^+$, leading to a contradiction.

As per the statement of the theorem, $N_{i,t} > \frac{8 \theta_{max}^2 \log T}{\Delta^2}$.  Therefore by \Cref{lemma:epsilon-delta-relation}, $2\epsilon_{i,t}\theta_i < \Delta$. For $i \in [K] \setminus S_{\Delta}$, $W_{*} - W_{i} > \Delta > 2\epsilon_{i,t}\theta_i$. So Condition 3 above does not hold true. So if the sub-optimal arm $i$ must have been pulled, only possibilities are for Condition 1 or 2.
\begin{align*}
P(\widehat{W}_{i,t}^+ >& \widehat{W}_{i_*,t}^+) \leq P(\text{Condition 1}) + P(\text{Condition 2}) \\
& \leq \frac{1}{2}P(B_{i,t}) + \frac{1}{2}P(B_{i_*, t}) \leq 2/T^{-4}
\end{align*}
\begin{align*}
P(\widehat{W}_{i_*,t}^+ & >\widehat{W}_{i,t}^+ ) = 1- P(\widehat{W}_{i,t}^+ > \widehat{W}_{i_*,t}^+) \geq 1 - \frac{2}{T^4}
\end{align*}
thereby completing the proof.
%In the beginning of the proof, we assumed that the agents are truthful. In \Cref{sec:truthfulness}) we will show that the mechanism is DSIC and therefore the agents are truthful.

We are now ready to state our main result on the incurred regret.

\begin{theorem}
If the $\Delta$-UCB mechanism is executed for a total time horizon of $T$ rounds, it achieves an 
expected $\Delta$-regret of $O(\log T)$.
\label{theorem:Delta-ucb-regret}
\end{theorem}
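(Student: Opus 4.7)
The plan is to decompose the $\Delta$-regret into a contribution from the exploration phase (rounds $1,\dots,u$) and a contribution from the exploitation phase (rounds $u+1,\dots,T$), and to bound each separately. Since $u = 8Kv_{max}^2 \log T/\Delta^2 = O(\log T)$ and every per-round loss is bounded above by $W_* \le v_{max}$, the exploration phase contributes at most $u \cdot v_{max} = O(\log T)$ to the $\Delta$-regret, regardless of which arms happen to lie outside $S_\Delta$. This handles the easier half.

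For the exploitation phase, the key is that the arm $\hat{i}_*$ fixed at the end of exploration is pulled in every remaining round, so I need to argue that with very high probability $\hat{i}_* \in S_\Delta \cup \{i_*\}$, hence contributes zero $\Delta$-regret. By the round-robin schedule, at time $t=u$ every arm has been pulled $N_{j,u} = u/K = 8v_{max}^2 \log T/\Delta^2$ times, satisfying the hypothesis of \Cref{theorem:pull_best_arm}. Applying that theorem to each $i \in [K]\setminus (S_\Delta \cup \{i_*\})$ gives $P(\widehat{W}_{i_*,u}^+ > \widehat{W}_{i,u}^+) \ge 1 - 2/T^4$, and a union bound over the at most $K-1$ such arms yields $P\bigl(\hat{i}_* \in S_\Delta \cup \{i_*\}\bigr) \ge 1 - 2(K-1)/T^4$. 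On this good event the exploitation phase contributes zero $\Delta$-regret; on the complementary event the regret over the $T-u$ exploitation rounds is trivially bounded by $T v_{max}$, so its expected contribution is at most $\tfrac{2(K-1)}{T^4}\cdot T v_{max} = O(1/T^3)$.

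Putting the two pieces together,
\begin{align}
\mathbb{E}[\Delta\text{-regret}] \;\le\; u \cdot v_{max} \;+\; \frac{2(K-1) v_{max}}{T^3} \;=\; \frac{8K v_{max}^3 \log T}{\Delta^2} + O(1/T^3) \;=\; O(\log T),
\end{align}
which is the desired bound (with $K$, $v_{max}$, and $\Delta$ treated as constants relative to $T$, consistent with the whole premise of the paper).

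The main obstacle I anticipate is bookkeeping around the boundary between the two phases: I must make sure the condition $N_{i,u} > 8v_{max}^2\log T/\Delta^2$ required by \Cref{theorem:pull_best_arm} really holds after the round-robin exploration (a strict-vs-weak inequality issue that can be absorbed by rounding $u$ up, e.g.\ using $\lceil\cdot\rceil$), and that the union bound is taken only over the arms outside $S_\Delta$ rather than over all rounds, since once $\hat{i}_*$ is fixed at time $u$ the UCB indices no longer change and a single high-probability event at $t=u$ suffices for the entire exploitation segment. Everything else is routine.
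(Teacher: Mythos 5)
Your proposal is correct and follows essentially the same route as the paper's proof: bound the exploration phase by $u\cdot v_{max}=O(\log T)$, invoke \Cref{theorem:pull_best_arm} to show the exploitation phase contributes zero $\Delta$-regret on a high-probability good event, and absorb the bad event with a trivial $Tv_{max}$ bound times its small probability. The only (harmless) difference is that you take the union bound over the $K-1$ arms outside $S_\Delta$ at the single freeze time $u$, whereas the paper conditions on the global event $G$ over all arms and all rounds; both yield the same $O(\log T)$ conclusion.
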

\begin{proof}
The main idea in the proof is to compute the $\Delta$-regret conditional on two events - $G$ and $G^c$ and then to find a bound for these two conditional expectations.
\begin{align*}
\mathbb{E}&\left[ \Delta\text{-regret} |G \right] = \mathbb{E}\left[ \Delta\text{-regret} | \forall t, \forall i \; W_{i} \in [\widehat{W}_{i,t}^-,\widehat{W}_{i,t}^+ ] \right] \\
&=\mathbb{E}\left[ \sum_{t=1}^T \left(W_* - W_{I_t} \right) \mathbbm{1}\left[ I_t \in [K]\setminus S_\Delta\right] |\forall t, \forall i \; W_{i} \in [\widehat{W}_{i,t}^-,\widehat{W}_{i,t}^+ ]  \right] \\
&=\mathbb{E}\left[ \sum_{t=1}^T \left(W_* - W_{I_t} \right) \mathbbm{1}\left[ I_t \in [K]\setminus S_\Delta\right] | W_{I_t} \in [\widehat{W}_{I_t,t}^-,\widehat{W}_{I_t,t}^+ ]  \right]\\
&\leq \frac{8K \theta_{max}^3 \log T}{\Delta^2} 
\end{align*}
The last step comes from the fact that Conditions 1 and 2 in the proof of \Cref{theorem:pull_best_arm} are eliminated  as we are given that the event $G$ has occurred. After exploration rounds, $N_{i,t}\geq 8K \theta_{max}^2 \log T/\Delta^2$. From \Cref{theorem:pull_best_arm}, no $\Delta$-regret occurs during exploitation since $G$ is true. Therefore the regret is only incurred during the exploration rounds.
%From \Cref{lemma:epsilon-delta-relation}, Condition 3 can occur only for  $\frac{8K \theta_{max}^2 \log T}{\Delta^2}$ rounds and the maximum regret is $\theta_{max}$ in each such round.

%From \Cref{lemma:good-event}, $P(G) > 1 - \frac{1}{T^2}$. But we know that $P(G)<1$.

We now compute $\mathbb{E}\left[ \Delta\text{-regret} |G^c \right]$.
\begin{align}
\mathbb{E}\left[ \Delta\text{-regret} |G^c \right] \leq T \theta_{max}
\end{align}
But $P(G^c) = 1 - P(G)< \frac{2}{T^2}$ from \Cref{lemma:good-event}. \\
Putting all the steps together,
\begin{align}
\mathbb{E}&\left[ \Delta\text{-regret} \right] = \mathbb{E}\left[ \Delta\text{-regret} |G \right] P(G) + \mathbb{E}\left[ \Delta\text{-regret} |G^c \right] P(G^c) \nonumber \\
 & \leq \frac{8K\theta_{max}^3 \log T}{\Delta^2} * 1 + 
    T\theta_{max}*\frac{2}{T^2} \nonumber\\
  & \leq \frac{8K\theta_{max}^3 \log T}{\Delta^2} + 2
\end{align}
The second term is less than 2 as $\theta_{max} \ll T$.
This completes the proof.
\end{proof}
{
\color{blue}
A consequence of the above theorem is that  even if an adversary chooses an arbitrary small gap between the best and second best arm, there is nothing to worry for the planner - if the gap is less than his tolerance $\Delta$, no loss is incurred as opposed to the otherwise $\Omega(T^{2/3})$ loss in \cite{Babaioff2014}.

\subsection{A Lower Bound for $\Delta$-regret}

We will now discuss a lower bound for the $\Delta$-regret incurred by our approach. In particular, we will provide the lower bound for the  case where $\theta_i = 1$ for all $i$ and is known. The proof will follow along the lines of the lower bound proof in \cite{Bubeck2012a}. The same lower bound will also naturally apply to the case of the general strategic version as well, since we our proposed mechanism $\Delta$-UCB is truthful and achieves a matching upper bound. 

Let $kl(p,q)$ denote the KL divergence between the distributions Bernoulli($p$) and Bernoulli($q$). Then $kl(p,q) = p \log p/q  + (1-p) \log (1-p)/(1-q)$.
\begin{theorem}
\label{thm:lower_bound_non_strategic}
Consider the setting where $\theta_i = 1 \forall i \in [K].$ Suppose an algorithm satisfies $\mathbb{E}[N_{i,t}] = o(t^a)$ for any set of Bernoulli reward distributions and for all arms $i \notin S_{\Delta}$ and $a >0$. Then  
for any set of Bernoulli reward distributions we have,
\begin{equation}
\liminf_{T \rightarrow \infty} \frac{\mathbb{E}[\Delta\text{-regret}]}{\log T } \geq \sum_{i \notin S_{\Delta}} \frac{\Delta_i}{kl(\mu_i, \mu^* + \Delta)} 
\end{equation}
where $\mu^* = \argmax_{j \in [K]} \mu_j$, $\Delta_i = \mu^* - \mu_i$ for all $j \in [K]$.
\end{theorem}

\begin{proof}
We will provide the proof for the case of two agents. The proof for the case $ K > 2$ follows analogously. Assume that $\mu_2 \leq \mu_1 \leq 1$ and $ \mu_1 - \mu_2 > \Delta$. Therefore agent $1$ is optimal and agent 2 does not belong to $S_{\Delta}$.  
For any $\epsilon > 0$, due to the continuity of $kl(\mu_2, x)$, we can find $\mu'_2 \in (\mu_1 + \Delta, 1)$ such that 
\begin{equation}
\label{eq:kl_relation}
kl(\mu_2, \mu'_2) \leq (1+ \epsilon) kl(\mu_2, \mu_1 + \Delta)
\end{equation}
 This configuration then corresponds to an alternate setting where the mean of agent 2 is $\mu'_2$. In this alternate setting, $\mu'_2 - \mu_1 > \Delta$ and agent 2 is the unique optimal.
 For $s \in \{1, \ldots, T \}$, let,
 \begin{equation}
 \tilde{kl}_s = \sum_{t=1}^s \frac{\mu_2  \rho_2^t + (1-\mu_2) (1- \rho_2^t)}{\mu'_2  \rho_2^t + (1-\mu'_2) (1- \rho_2^t)}
 \end{equation}
 It can be verified that $\lim_{t \rightarrow \infty } \mathbb{E}[ \tilde{kl}_t ]/t = kl(\mu_2, \mu'_2)$ (where the expectation is taken over $\rho_2^t$)  and therefore $\tilde{kl}_t$ serves as an un-normalized estimate for $kl(\mu_2, \mu'_2)$.
 
 Let $C_{T}$ denote the following random variable,
 \begin{equation}
 C_T = \mathbbm{1}\{ N_{2,T}  < \frac{(1-\epsilon)\log T}{kl(\mu_2,\mu'_2)} \text{ and } \tilde{kl}_{N_{2,T}} \leq (1 - \epsilon/2) \log T)\}
 \end{equation}
 
 One may verify that $\mathbb{P}_{\mu'_2} (C_T = 1)  = \mathbb{E}_{\mu_2}[C_T \exp(-\tilde{kl}_{N_{2,T}})]$ by applying a change of measure.
 We will now show that $ \mathbb{P}_{\mu_2} (C_T = 1) \rightarrow 0$ as $T \rightarrow \infty$. This is due to the following:
 \begin{align*}
 \mathbb{P}_{\mu'_2} (C_T = 1)  = \mathbb{E}_{\mu_2}[C_T \exp(-\tilde{kl}_{N_{2,T}})] \geq \exp (- (1- \epsilon/2) \log T) \times \mathbb{P}_{\mu_2}(C_T = 1)
 \end{align*}
 Therefore, setting $f_T = \frac{(1-\epsilon)\log T}{kl(\mu_2,\mu'_2)}$, and applying Markov inequality we get,
 \begin{align*}
   \mathbb{P}_{\mu_2}(C_T = 1)& \leq T^{1- \epsilon/2}  \mathbb{P}_{\mu'_2} (C_T = 1) \leq T^{1- \epsilon/2}  \mathbb{P}_{\mu'_2} (N_{2,t} \leq f_T)\\
   &\leq T^{1- \epsilon/2} \frac{\mathbb{E}_{\mu'_2}[T - N_{2,T}]}{T-f_T}   \rightarrow 0
\end{align*}
The last step arises as a consequence of  $T - N_{2,T} = N_{1,T}$ and agent 1 is sub-optimal  for the setting where agent 2 has the mean reward of $\mu'_2$.

We will finally show that $\mathbb{P}_{\mu_2}(N_{2,T} < f_T) \rightarrow 0$.
\begin{align*}
\mathbb{P}_{\mu_2}(C_T = 1)  &\geq \mathbb{P}_{\mu_2}(N_{2,T} < f_T \text{ and } \max_{s \leq f_T} \tilde{kl}_s \leq (1-\epsilon/2)\log T) \\
&= \mathbb{P}_{\mu_2}(N_{2,T} < f_T \text{ and } \frac{kl(\mu_2, \mu'_2)}{(1-\epsilon) \log T} \max_{s \leq f_T} \tilde{kl}_s \leq \frac{kl(\mu_2, \mu'_2)}{(1-\epsilon)}(1-\epsilon/2)) 
\end{align*}
Note that $kl(\mu_2, \mu'_2) > 0$ and $\frac{1-\epsilon/2}{1 - \epsilon} \geq 1$. Therefore by an application of the strong law of  large numbers, we have
\begin{align*}
\lim_{T \rightarrow \infty} \mathbb{P}_{\mu_2}( \frac{kl(\mu_2, \mu'_2)}{(1-\epsilon) \log T} \max_{s \leq f_T} \tilde{kl}_s \leq \frac{kl(\mu_2, \mu'_2)}{(1-\epsilon)}(1-\epsilon/2)) = 1
\end{align*}
Since $\mathbb{P}_{\mu_2}(C_T = 1) \rightarrow 0$, we must have $\mathbb{P}_{\mu_2}(N_{2,T} < f_T) \rightarrow 0$ as well.
Applying Markov inequality again, we get,
\begin{align*}
\mathbb{E}_{\mu_2}[N_{2,T}] \geq \mathbb{P}_{\mu_2}(N_{2,T} \geq f_T) f_T =  \frac{1-\epsilon}{kl(\mu_2,\mu'_2)} \geq  \frac{1-\epsilon}{1+\epsilon} \frac{\log T}{ kl(\mu_2, \mu_1 + \Delta)}
\end{align*}
The last step is obtained by applying \Cref{eq:kl_relation}.  This completes the proof. Note the key difference between our proof and \cite{Bubeck2012a} lies in \Cref{eq:kl_relation}. Our RHS in \Cref{eq:kl_relation} is necessary to ensure that in the alternate scenario agent $1$ is sub-optimal. 
\end{proof}

\begin{remark}
The lower bound for the expected $\Delta$-regret \Cref{thm:lower_bound_non_strategic} is quite similar to the lower bound for the regret of the UCB algorithm in \cite{Bubeck2012a}. The difference is that the KL divergence term in the bound is also a function of the parameter $\Delta$. Intuitively instead of considering the KL divergence between $KL(\mu_i, \mu^*)$, we give an allowance of $\Delta$ for the optimal agent.
\end{remark}

}
\section{Extension to Multi-Slot SSA}
In the previous sections, we assumed that there was a single slot for which the advertisers were competing. We now look at a more general setting where there are $M$ slots to be allocated to the $K$ agents. As before, each advertiser has exactly one ad for display and the CTR for advertisement $i$ is denoted by $\mu_i$. Recall that in the case of single slot auctions, the CTR exactly denoted the probability with which an ad received a click. However in the generalized setting of multi-slot auctions, an additional parameter comes into play while computing the click probability due to which the problem becomes much harder \cite{gatti2015truthful}. 

Each position or slot $m$ is associated with a parameter $\lambda_m$ called `prominence'. $\lambda_m$ denotes the probability with which a user observes an ad at slot $m+1$ given he has observed the ad at slot $m$. In order to understand the need for this parameter, a useful scenario to imagine is the listing of web-pages in Google for a query. There are two phases that one can think of once the listing of pages or results have appeared. \\
%\begin{itemize}
\textbf{Phase 1:} This is the phase where a user scans through the pages listed. A page listed higher up in the ranking (say second from the top) has more chances of being observed by a user rather than a page that is far below in the ranking (say fifth from the top). $\lambda_4$, for instance, denotes the probability that a user observes the fifth page, given he has observed the fourth page. Coming back to sponsored ads, we assume that $\lambda_0 =1$, that is, the ad listed in the first slot is surely observed. 
We denote by $\Gamma_m$ the probability that an ad at slot $m$ is observed. $\Gamma_m$ is computed as,
\begin{align}
\Gamma_m =
  \begin{cases}
    1                     &  \quad \text{if } m = 1 \\
   \prod\limits_{s=1}^{m-1} \lambda_s     &  \quad \text{if } 2 \leq m \leq M\\
  0  & \quad \text{if }  m> M\\
  \end{cases}
  \label{eq:gamma_computation}
\end{align}
This modeling assumption for $\Gamma_m$ is known as position dependent cascade model.\\
\textbf{Phase 2:} After having scanned through the list, the user decides to click one or more of the shown ads. In the multi-slot setting \cite{Gatti2012}, it is assumed that multiple ads in a listing may receive clicks. The probability that ad $i$ receives a click when shown at slot $m$ = $\Gamma_m \mu_i$.
%\end{itemize}

We assume that $\lambda_m$, $m=1, \ldots, M$ are known to the planner a-priori. The problem of learning these parameters along with the CTR $\mu$ is much harder in the presence of strategic agents. Therefore, in this section, we work with the assumption that the $\lambda$s and hence $\Gamma$s are known. In \Cref{sec:unknown_gamma}, we give pointers for design of mechanisms where the $\Gamma$s are  unknown.

The above modeling assumptions are as per standard conventions \cite{gatti2015truthful}. In the multi-slot setting, the allocation is given to multiple agents at every time step. We denote by $\mathcal{A}(b, \rho, t)$ $ \subset \{ 1, \ldots, K \}$, the allocation at time $t$ for bids $b$ and click realization $\rho$. The cardinality of the allocated set $|\mathcal{A}(b, \rho, t)| = M$. We also use the notation $\mathcal{A}_i(b, \rho, t) = m$ to denote the allocation to agent $i$ at time $t$ is slot $m$, for the bid profile $b$, click realization $\rho$. If an agent $i$ is not allocated any of the $M$ slots at time $t$, we say $\mathcal{A}_i(b, \rho, t) = 0$.

 We denote by $W_{i,m}$ the social welfare of agent $i$, when he is given slot $m$. $W_{i,m}$ is the expected valuation that agent $i$ receives when he is given slot $m$ and is computed as, 
 \begin{align}
  W_{i,m} = \Gamma_m \mu_i \theta_i
 \end{align}
 For ease of reference, the additional relevant parameters for the multi-slot setting are provided in \Cref{tab:notations-multi-slot}.
 \begin{table}[h!]
\begin{tabular}[c]{|p{0.1\textwidth}|p{0.8\textwidth}|}
\hline
\textbf{Symbol} & \textbf{Description} \\\hline
$M$ & No. of slots \\\hline
$[M]$ & Set of $M$ slots = $\{ 1, \ldots, M\}$ \\\hline
%M & No. of slots. M = 1 for single slot SSA \\\hline
$\lambda_m$ &  Prominence (Probability with which a user observes an ad at slot $m+1$ given he has observed the ad at slot $m$)\\\hline
$\Gamma_m$ & Probability that an ad at slot $m$ is observed\\\hline
$W_{i,m}$ & Social welfare when agent $i$ is allocated slot $m$ \\\hline
$M_{i,t}^{(m)}$ & No. of times agent $i$ has been alloted slot $m$ till time $t$  \\\hline
$N_{i,t}$ & No. of times agent $i$ has been selected till time $t$ over all slots\\\hline
$K^{(m)}$ & Optimal agent for slot $m$\\\hline
$W_{*,m}$ & Social welfare when agent $K^{(m)}$ is allocated slot $m$\\\hline
$S_{\Delta,m}$ & Set of agents whose social welfare is less than $\Delta$ away from $K^{(m)}$. These agents do not contribute to $\Delta$-regret when allocated slot $m$. \\\hline
\end{tabular}
\caption{Additional notations for multi-slot SSA}
\label{tab:notations-multi-slot}
\end{table}

 Having described the multi-slot setting, we now analyze the scenario from the view point of the search engine or central planner. In the ideal scenario, the planner would like to allot the ads exactly to the top $M$ agents with the largest social welfare. This use case has been studied in the literature \cite{Gatti2012} and exploration separated mechanisms with regret of $O(T^{2/3})$ have been proposed.
Various possible allocations are explored for $O(T^{2/3})$ time steps for every agent after which the allocation algorithm is guaranteed to converge to the ideal allocation with high probability. As in the single slot case,  $O(T^{2/3})$ exploration rounds are required to distinguish all the agents perfectly from each other, when there are agents whose social welfare values are arbitrarily close.
 
 However, a much more practical problem of interest is to study and design mechanisms when the search engine is indifferent to a gap in $\Delta$ in social welfare for every slot. We observe that in cases where the agents are well-separated, $O(T^{2/3})$ exploration rounds are not required. In fact, $O(\log T)$ exploration rounds are sufficient to converge to an allocation that is well within the requirements of the search engine.
 
Having explained the problem, we now formalize the notions of separatedness in this setting. 
 Let $K^{(1)},  \ldots, K^{(M)}$ $ \in [K]$ be the best $M$ agents in terms of their single slot social welfare values, that is, $\mu_{K^{(1)}} \theta_{K^{(1)}} > \mu_{K^{(2)}} \theta_{K^{(2)}} > \ldots >\mu_{K^{(M)}} \theta_{K^{(M)}}  $. Let $W_{*,m} = W_{K^{(m)}, m}$. The ideal solution would be to allocate agent $K^{(m)}$ the slot $m$. This allocation would yield the largest social welfare but in the worst case, when the agents' social welfares are separated by a function of $T$, converging to this optimal  allocation would require $O(T^{2/3})$ exploration rounds \cite{Gatti2012}. Instead, for a prescribed value of $\Delta$ fixed by the search engine, define the set,
 \begin{equation}
  S_{\Delta,m } = \left\lbrace i \in [K]:  
 W_{K^{(m)}, m} - W_{i,m} < \Delta \right\rbrace.
 \end{equation} 
 $S_{\Delta,m }$ is the set of all agents whose social welfare is at most $\Delta$ away from the agent $K^{(m)}$ ( who should have ideally been given slot $m$).
 The planner is indifferent to the regret contributed by the agents in $S_{\Delta,m }$, if any of them are allotted slot $m$.
 Hence we define the multi-slot $\Delta$-regret metric as,
 \begin{align*}
 \Delta\text{-regret} = \sum_{t=1}^T \sum_{m=1}^M (W_{*,m} - W_{I_{t,m}, m})\mathbbm{1}\left[ I_{I_t,m} \in [K]\setminus S_{\Delta,m}\right]
 \end{align*}
%Denote the LCB and UCB indices of $W_{i,m}$ at time $t$ as 
% $\widehat{W}_{i,m,t}^-$  and $\widehat{W}_{i,m,t}^+$ respectively. 
\noindent The $\Delta$-UCB mechanism for the multi-slot SSA is given in \Cref{alg:delta-ucb-multi-slot}.
 
\begin{algorithm}[h]
\begin{algorithmic}
\Require{ \\
$M:$ No. of slots, $K$: No. of agents, $T$:  Time horizon 
\\$\Delta:$ parameter fixed by the center, $\Gamma_1, \ldots, \Gamma_M$: Slot specific parameters \\ $\theta_{max}:$ Maximum valuation of the agents \\
}
\hrule 
\State Elicit bids $b= (b_1, b_2, \ldots, b_K)$ from all the agents
\State Initialize $\widehat{\mu}_{i,0}  = 0 , N_{i,0} = 0  \; \forall i \in [K]$, 
\State $\gamma= \lceil 8K\theta_{max}^2\log T/\Delta^2 \rceil$
\For{$t= 1, \ldots, \gamma$} \Comment Exploration rounds
\State $\mathcal{A}(b, \rho, t) = \phi $
\For{$m = 1, \ldots, M$}
\State $I_{t,m} = (((t-1)\mod K) + m - 1)\mod K + 1$
\State $N_{I_{t,m}, t} =N_{I_{t,m}, t-1} + 1$
\State $M_{I_{t,m},t}^{(m)} =  M_{I_{t,m},t-1}^{(m)}+ 1$
\State $\mathcal{A}(b, \rho, t) = \mathcal{A}(b, \rho, t)
\cup I_{t,m}$ \Comment Allocate $I_{t,m}$ slot $m$ and observe $\rho_{I_{t,m}} (t)$.
\State $\widehat{\mu}_{I_{t,m},t} = \left(\widehat{\mu}_{I_{t,m},t-1}N_{I_{t,m}, t-1} + \frac{\rho_{I_{t,m}}(t)}{\Gamma_m} \right)/N_{I_{t,m}, t}$
\State $\epsilon_{I_{t,m}, t} = \sqrt{\left(\sum\limits_{m'=1}^M \frac{M_{I_{t,m},t}^{(m')}}{\Gamma_{m'}^2} \right) \frac{2\log T}{N_{I_{t,m}, t}^2}}$
\State $\widehat{\mu}_{I_{t,m},t}^+ = \widehat{\mu}_{I_{t,m},t} + \epsilon_{I_{t,m}, t} $
\State $\widehat{\mu}_{I_t,t}^- = \widehat{\mu}_{I_{t,m},t} - \epsilon_{I_{t,m}, t} $
\EndFor

\State $\widehat{\mu}_{i,t}^+ = \widehat{\mu}_{i,t-1}^+, \widehat{\mu}_{i,t}^- = \widehat{\mu}_{i,t-1}^-  \;\forall i \in [K]\setminus \mathcal{A}(b, \rho, t) $
%\State $\widehat{\mu}_{i,t}^- = \widehat{\mu}_{i,t-1}^- \;\forall i \in [K]\setminus \mathcal{A}(b, \rho, t)$
\State $P_{i}^t (b, \rho) = 0 \;\forall i \in [K]$ \Comment Free rounds
\EndFor

\State  $\widehat{K}^{(1)}, \widehat{K}^{(2)} , \ldots, 
\widehat{K}^{(M)}, \ldots, \widehat{K}^{(K)}$ = sorted list of agents in the decreasing order of $\widehat{\mu}_{i,\gamma}^+ b_i $

\For{$t = \gamma+1, \ldots, T$} 
\Comment Exploitation rounds

\State $\mathcal{A}(b, \rho, t) = \phi$
\For{$m = 1, \ldots, M$}
\State $I_{t,m} = \widehat{K}^{(m)}$
\State $\mathcal{A}(b, \rho, t) = \mathcal{A}(b, \rho, t)
\cup  \widehat{K}^{(m)}$
\State $P_{ \widehat{K}^{(m)}}^t(b, \rho) =  \left(1/\Gamma_m \mu_{\widehat{K}^{(m)},t-1}^+ \right)\sum_{l=m+1}^{M+1} \left(\Gamma_{l-1} - \Gamma_{l}\right) \widehat{\mu}_{K^{(l)},t-1}^+ b_{K^{(l)}} \rho_{\widehat{K}^{(m)}}(t)$
\EndFor

\State $P_{i}^t (b, \rho) = 0 \; \forall i \in [K] \setminus \mathcal{A}(b, \rho, t) $
\State $\widehat{\mu}_{i,t}^+= \widehat{\mu}_{i,\gamma}^+$, $\widehat{\mu}_{i,t}^-= \widehat{\mu}_{i,\gamma}^- \; \forall i \in [K]$  \Comment No more learning
\EndFor
\end{algorithmic}
\caption{$\Delta$-UCB Mechanism for multiple slot SSA}
\label{alg:delta-ucb-multi-slot}
\end{algorithm}

We analyze the regret and truthfulness of \Cref{alg:delta-ucb-multi-slot}. The lemmas and theorems for establishing the results for the multi-slot setting are similar to the single slot setting, however there are subtle differences in proving many of the results. We will highlight them as and when necessary.
\begin{theorem}
In the multi-slot setting $\Delta$-$UCB$ is Dominant Strategy Incentive Compatible (DSIC) and Individually Rational (IR).
\end{theorem}
\begin{proof}
The mechanism is an implementation of the weighted VCG scheme (with the weights for each agent $w_i = \mu_i^+/\mu_i )$ and is hence DSIC and IR.
%\noindent The proof is along the lines of \Cref{theorem:single-slot-truthful}.
\end{proof}
 
\begin{lemma}{}
For an agent $i$ and slot $m$, the click through rate UCB indices for agent $i$, 
\begin{align}
\widehat{\mu}_{i,t}^+  = \widehat{\mu}_{i, t}  + \epsilon_{i,t}
  =  \widehat{\mu}_{i, t}  +
  \sqrt{\left(\sum\limits_{m'=1}^M \frac{M_{i,t}^{(m')}}{\Gamma_{m'}^2} \right) \frac{2\log T}{N_{i, t}^2}}
  % \sqrt{2\frac{\theta_i^2 \Gamma_m^2 \log T}{N_{i,t}}}
\end{align}
\begin{align}
\widehat{\mu}_{i,t}^- =  \widehat{\mu}_{i,t} - \epsilon_{i,t} = \widehat{\mu}_{i,t}   -  \sqrt{\left(\sum\limits_{m'=1}^M \frac{M_{i,t}^{(m')}}{\Gamma_{m'}^2} \right) \frac{2\log T}{N_{i, t}^2}}
\end{align}
satisfy $P(\mu_{i} \notin [\widehat{\mu}_{i, t}^-, \widehat{\mu}_{i,t}^+])) \leq 2T^{-4} \; \forall t$
\label{lemma:ctr-ucb-index-multi-}
\end{lemma}
\begin{proof}
At every time step, we observe samples $\rho_{I_{t,m}}(t), m = 1, \ldots, M $ corresponding to the clicks of the allocated ads. These samples also encompass slot specific information which must be accounted for in the computation of empirical mean as well as UCB index for $\mu_i$. For an agent $i$,
let the random variable  $C_{i,m}$ denote whether ad $i$ receives a click at slot $m$. Therefore $C_{i,m}$ is a Bernoulli random variable with bias $\Gamma_m \mu_i$.

We obtain a sample $\rho_i(.)$ of $C_{i,m}$ when ad $i$ is allocated slot $m$. However it is the samples from $C_{i,m}/\Gamma_m$ that gives us an unbiased estimator for $\mu_i$. Therefore, the random variable of interest is the Bernoulli random variable,
\begin{align}
D_{i,m} =
  \begin{cases}
    0             &  \quad \text{w.p } 1- \Gamma_m\mu_i \\
   1/\Gamma_m    &  \quad \text{w.p } \Gamma_m\mu_i
  \end{cases}
  \label{eq:X_computation}
\end{align}
$D_{i,m} $ is bounded in $[0, 1/\Gamma_m]$ and $\mathbb{E}[D_{i,m}]$ is $\mu_i$. Also,
\begin{align*}
\log \mathbb{E}\left[\exp(\lambda (D_{i,m} - \mu_i)\right] \leq \frac{\lambda^2}{8 \Gamma_m^2} \text{  (by Hoeffding's Lemma)}
\end{align*} 

Consider the scenario where, for an ad $i$, a single sample click is available from each slot. Let $X_{i,m}$ denote this sample of $C_{i,m}$. Assume $X_{i,m}$ are all independent and $ \widehat{\mu}_i = 1/M\sum_{m=1}^M X_{i,m}/\Gamma_m$. $\mathbb{E}[\widehat{\mu_i}] = \mu_i$. Now,
\begin{align}
P(& \widehat{\mu_i} - \mu_i > \epsilon) =
P\left(\sum_{m=1}^M X_{i,m}/\Gamma_m -M \mu_i > \epsilon M \right) \nonumber \\&= P\left(\exp(\lambda (\sum_{m=1}^M X_{i,m}/\Gamma_m -M \mu_i ))> \exp(\lambda\epsilon M)  \right)  \nonumber \\ 
& \leq \mathbb{E}\left[\exp(\lambda (\sum_{m=1}^M X_{i,m}/\Gamma_m -M \mu_i ))\right]/ \exp(\lambda\epsilon M) \text{  (by Markov inequality)} \nonumber\\
& = \prod\limits_{m=1}^M \mathbb{E}\left[\exp(\lambda (X_{i,m}/\Gamma_m - \mu_i ))\right]/ \exp(\lambda\epsilon M) \text{  (by independence of }X_{i,m}\text{)} \nonumber \\
&= \exp\left(\sum_{m=1}^M \frac{\lambda^2 }{8 \Gamma_m^2} - \lambda M\epsilon\right)
\end{align}
In order to tighten the above bound on the right hand side, one must find appropriate $\lambda$ which minimizes
$\exp(\sum_{m=1}^M \frac{\lambda^2 }{8 \Gamma_m^2} - \lambda M\epsilon)$. Setting $\lambda =  \lambda^*= 4M \epsilon/ \eta$ where $\eta = \sum_{m=1}^M 1/\Gamma_m^2$ achieves the minimum value. 
Therefore,
\begin{align}
P(& \widehat{\mu_i} - \mu_i > \epsilon) \leq \exp(-2M^2 \epsilon^2 /\eta)
\end{align}
In order to obtain a $\delta$ confidence on $P( \widehat{\mu_i} - \mu_i  > \epsilon) $, $\epsilon$ must  be set so that 
$\exp(-2M^2 \epsilon^2 /\eta) = \delta = T^{-4}$. 
Therefore, $\epsilon = \sqrt{\sum\limits_{m=1}^M \left( \frac{1}{\Gamma_m^2} \right) \frac{2 \log T}{M^2}}$.
In the above analysis we assumed that from each slot, one sample was available. When we have a total of $N_{i,t}$ independent samples for ad $i$, with $M_{i,m}^t$ samples for slot $m$ at any time $t$, $\eta = \sum_{m=1}^M M_{i,t}^m/\Gamma_m^2$ and therefore $\epsilon_{i,t} = \sqrt{\left(\sum\limits_{m'=1}^M \frac{M_{i,t}^{(m')}}{\Gamma_{m'}^2} \right) \frac{2\log T}{N_{i, t}^2}}$, completing the proof.
\end{proof}
A noteworthy feature of our estimates is the following. An allocation of an ad $i$ in a slot $m$ yields a sample for the computation of not only $\widehat{W}_{i,m,t}$, but also for $\widehat{W}_{i,m',t}$  for all slots $m' \in \{1, \ldots, M \}$. This is because $\Gamma_m$ is known to the planner a-priori. Therefore note that, the number of allocations that ad $i$ receives till time $t$, $N_{i,t}$ is the sum of the number of allocations that agent $i$ receives irrespective of the slot or inclusive of all the slots.
\begin{lemma}{}
For an agent $i$ and slot $m$, the social welfare UCB indices for agent $i$, 
\begin{align}
\widehat{W}_{i,m,t}^+  = \Gamma_m \widehat{\mu}_{i, t} \theta_i + \epsilon_{i,m,t}
  =  \Gamma_m \widehat{\mu}_{i, t} \theta_i +
  \sqrt{\left(\sum\limits_{m'=1}^M \frac{M_{i,t}^{(m')}}{\Gamma_{m'}^2} \right) \frac{2\theta_i^2 \Gamma_m^2\log T}{N_{i, t}^2}}
  % \sqrt{2\frac{\theta_i^2 \Gamma_m^2 \log T}{N_{i,t}}}
\end{align}
\begin{align}
\widehat{W}_{i,m,t}^- = \Gamma_m \widehat{\mu}_{i,  t} \theta_i - \epsilon_{i,m,t} = \Gamma_m \widehat{\mu}_{i,t}  \theta_i  -  \sqrt{\left(\sum\limits_{m'=1}^M \frac{M_{i,t}^{(m')}}{\Gamma_{m'}^2} \right) \frac{2\theta_i^2 \Gamma_m^2\log T}{N_{i, t}^2}}
\end{align}
satisfy $P(W_{i,m} \notin [\widehat{W}_{i, m, t}^-, \widehat{W}_{i,m, t}^+])) \leq 2T^{-4} \; \forall t$
\label{lemma:ucb-index-multi}
\end{lemma}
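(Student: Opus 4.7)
The plan is to carry over the single-slot argument of \Cref{lemma:ucb-index} almost verbatim, the only new ingredient being that the social welfare at slot $m$ now carries an extra multiplicative factor of $\Gamma_m$. Since $\Gamma_m \in (0,1]$ and $v_i > 0$, I will show that the event $\{W_{i,m} \notin [\widehat{W}_{i,m,t}^-, \widehat{W}_{i,m,t}^+]\}$ coincides exactly with the single-slot event $\{\mu_i \notin [\widehat{\mu}_{i,t}^-, \widehat{\mu}_{i,t}^+]\}$, and then invoke Hoeffding's inequality on the Bernoulli empirical mean $\widehat{\mu}_{i,t}$ in the same way as before.

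First, I would rewrite the three quantities of interest as common multiples of the underlying CTR statistics: $W_{i,m} = \Gamma_m v_i \mu_i$, and directly from the statement of the lemma,
\[
\widehat{W}_{i,m,t}^{\pm} \;=\; \Gamma_m v_i \widehat{\mu}_{i,t} \;\pm\; \Gamma_m v_i \sqrt{\tfrac{2 \log T}{N_{i,t}}} \;=\; \Gamma_m v_i\, \widehat{\mu}_{i,t}^{\pm}.
\]
Dividing the two-sided inclusion $W_{i,m} \in [\widehat{W}_{i,m,t}^-, \widehat{W}_{i,m,t}^+]$ through by the strictly positive constant $\Gamma_m v_i$ preserves the inequalities, so it is equivalent to $\mu_i \in [\widehat{\mu}_{i,t}^-, \widehat{\mu}_{i,t}^+]$; taking complements, the two ``bad'' events are identical. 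This reduction is the only conceptual departure from the single-slot proof.

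With the events identified, I would apply Hoeffding's inequality to $\widehat{\mu}_{i,t}$, which, exactly as in \Cref{lemma:ucb-index}, yields
\[
P\bigl(\mu_i \notin [\widehat{\mu}_{i,t}^-, \widehat{\mu}_{i,t}^+]\bigr) \;\leq\; \exp\bigl(-2 N_{i,t}\, \epsilon_{i,t}^2\bigr) \;=\; \exp(-4 \log T) \;=\; T^{-4},
\]
after substituting $\epsilon_{i,t}^2 = 2\log T / N_{i,t}$, and this bound transfers verbatim to the welfare event by the equivalence just established. I do not anticipate any genuine obstacle; the only care point is to verify that Hoeffding is applied to the empirical CTR itself (an average of i.i.d.\ $[0,1]$-valued Bernoulli samples indexed by the rounds in which agent $i$ was selected), so that the slot-dependent scaling $\Gamma_m$, which appears only in the confidence radius and not inside the concentration inequality, does not interfere with the $[0,1]$ boundedness required by Hoeffding.
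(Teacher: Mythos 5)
Your proposal is correct and is exactly the argument the paper intends: the paper's own proof is simply the remark that ``the proof idea is similar to Lemma~\ref{lemma:ucb-index},'' and your reduction --- factoring $\widehat{W}_{i,m,t}^{\pm} = \Gamma_m v_i \widehat{\mu}_{i,t}^{\pm}$ and $W_{i,m} = \Gamma_m v_i \mu_i$, dividing by the positive constant $\Gamma_m v_i$ to identify the bad event with the single-slot event $\{\mu_i \notin [\widehat{\mu}_{i,t}^-,\widehat{\mu}_{i,t}^+]\}$, and then applying Hoeffding to the empirical CTR --- is precisely that single-slot argument carried over. No gaps; you have merely made explicit the step the paper leaves implicit.
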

\begin{proof}
The proof idea is similar to \Cref{lemma:ucb-index}. 
\end{proof}

\begin{lemma}
\label{lemma:epsilon-delta-relation-multi}
Suppose at time step $t$, $N_{j,t} > \frac{8\theta_{max}^2 \log T}{\Delta^2} \; \forall j \in [K]$. Then $\forall i \in  [K] $ and $\forall m \in [M]$,
$
2\epsilon_{i,m, t}  < \Delta.
$
\end{lemma}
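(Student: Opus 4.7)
The plan is to mimic the proof of \Cref{lemma:epsilon-delta-relation} from the single-slot case, with one extra ingredient to absorb the new factor $\Gamma_m$ that now appears inside $\epsilon_{i,m,t}$. Recall that in the multi-slot definition from \Cref{lemma:ucb-index-multi} we have
\begin{align*}
\epsilon_{i,m,t} = \sqrt{\frac{2 v_i^2 \Gamma_m^2 \log T}{N_{i,t}}},
\end{align*}
so compared to the single-slot confidence half-width, the only change is the multiplicative factor $\Gamma_m$.

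First I would observe that $\Gamma_m \leq 1$ for every slot $m \in [M]$. This follows directly from the definition $\Gamma_m = \prod_{s=1}^{m-1} \lambda_s$, since each $\lambda_s$ is a conditional probability and hence lies in $[0,1]$; in particular $\Gamma_1 = 1$ and the $\Gamma_m$ are non-increasing in $m$. Second, by definition of $v_{max}$ we have $v_i \leq v_{max}$ for every agent $i$. Combining these two bounds yields $v_i^2 \Gamma_m^2 \leq v_{max}^2$ for all $i \in [K]$ and $m \in [M]$.

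Next, I would use the hypothesis $N_{i,t} > 8 v_{max}^2 \log T / \Delta^2$ exactly as in \Cref{lemma:epsilon-delta-relation} to conclude
\begin{align*}
\Delta^2 \;>\; \frac{8 v_{max}^2 \log T}{N_{i,t}} \;\geq\; \frac{8 v_i^2 \Gamma_m^2 \log T}{N_{i,t}} \;=\; 4 \epsilon_{i,m,t}^2.
\end{align*}
Taking square roots (both sides are non-negative) gives $\Delta > 2 \epsilon_{i,m,t}$, which is the desired inequality for every $i \in [K]$ and $m \in [M]$.

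There is no real obstacle here: the argument is essentially a one-line computation once the bound $\Gamma_m \leq 1$ is noted. The only subtle point, which is worth mentioning explicitly in the proof, is that in the multi-slot algorithm $N_{i,t}$ counts the number of times agent $i$ has been allocated to \emph{any} slot (as emphasized after \Cref{lemma:ucb-index-multi}), so that the single hypothesis $N_{i,t} > 8 v_{max}^2 \log T / \Delta^2$ suffices to control $\epsilon_{i,m,t}$ uniformly over all slots $m$.
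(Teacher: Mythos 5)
Your proof is correct and follows essentially the same route as the paper, which simply notes that the argument of \Cref{lemma:epsilon-delta-relation} carries over: the chain $\Delta^2 > 8 v_{max}^2 \log T / N_{i,t} \geq 8 v_i^2 \Gamma_m^2 \log T / N_{i,t} = 4\epsilon_{i,m,t}^2$ is exactly the single-slot computation with the additional observation $\Gamma_m \leq 1$. Your explicit remark that $N_{i,t}$ aggregates allocations over all slots, so one hypothesis controls $\epsilon_{i,m,t}$ uniformly in $m$, is a useful detail the paper leaves implicit.
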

\begin{proof}
The proof is similar to \Cref{lemma:epsilon-delta-relation}.
\end{proof}

\begin{lemma}
\label{lemma:good-event-multi}
For an agent $i$, slot $m$ and time $t$, let $B_{i,m,t}$ be the event $B_{i,m,t} = \{\omega: W_{i,m} \notin  [\widehat{W}_{i,m,t}^-(\omega), \widehat{W}_{i,m,t}^+(\omega)] \}$. Define the event $G =\bigcap\limits_t \bigcap\limits_i \bigcap\limits_m  B_{i,m,t}^c$.
Then $P(G) \geq 1 - \frac{2}{T^2}$.
\end{lemma}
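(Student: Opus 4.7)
The plan is to mirror the proof of \Cref{lemma:good-event} from the single-slot setting, with the added twist that now we must additionally union over the $M$ slots. Concretely, I would start from the definition $G^c = \bigcup_t \bigcup_i \bigcup_m B_{i,m,t}$ (by De Morgan's law on the intersection defining $G$), and then apply the union bound to obtain
\begin{align*}
P(G^c) \;=\; P\!\left(\bigcup_{t=1}^{T} \bigcup_{i \in [K]} \bigcup_{m \in [M]} B_{i,m,t}\right) \;\leq\; \sum_{t=1}^{T} \sum_{i \in [K]} \sum_{m \in [M]} P(B_{i,m,t}).
\end{align*}

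Next I would invoke \Cref{lemma:ucb-index-multi} to bound each individual term as $P(B_{i,m,t}) \leq T^{-4}$. Summing over $t \in \{1,\ldots,T\}$, $i \in [K]$ and $m \in [M]$ then yields
\begin{align*}
P(G^c) \;\leq\; T \cdot K \cdot M \cdot T^{-4} \;=\; \frac{KM}{T^{3}}.
\end{align*}
Finally, using the standard assumption (as in \Cref{lemma:good-event}) that $K,M \ll T$, we have $KM/T^3 \leq 1/T^2$, so $P(G) = 1 - P(G^c) \geq 1 - 1/T^2$, which is the desired bound.

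There is no real obstacle here beyond bookkeeping: the key input \Cref{lemma:ucb-index-multi} already accounts for the extra $\Gamma_m$ factor that distinguishes the multi-slot Hoeffding bound from the single-slot one, so once the per-event probability $T^{-4}$ is in hand, the argument is just a union bound over one additional index $m$. The mild care that is needed is only in ensuring that $KM \ll T$ is treated consistently with how $K \ll T$ was used in the single-slot version, so that the final probability bound is indeed $1 - 1/T^2$ rather than something of the form $1 - KM/T^3$.
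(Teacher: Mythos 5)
Your proof is correct, but it takes a different route from the paper's, and the difference is worth noting. You treat the three-fold union over $t$, $i$, and $m$ with a plain union bound, paying a factor of $M$ and arriving at $P(G^c) \leq KM/T^3$, which you then absorb using the assumption $KM \ll T$. The paper instead observes that for a fixed agent $i$ and time $t$, the events $B_{i,m,t}$ are \emph{identical} across slots $m$: since $\widehat{W}_{i,m,t}^{\pm} = \Gamma_m v_i\, \widehat{\mu}_{i,t}^{\pm}$ and $W_{i,m} = \Gamma_m \mu_i v_i$, the event $W_{i,m} \notin [\widehat{W}_{i,m,t}^-, \widehat{W}_{i,m,t}^+]$ holds if and only if $\mu_i \notin [\widehat{\mu}_{i,t}^-, \widehat{\mu}_{i,t}^+]$, independently of $m$. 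Hence $P\bigl(\bigcup_m B_{i,m,t}\bigr) = P(B_{i,1,t}) \leq T^{-4}$, the union over $m$ costs nothing, and the bound reduces exactly to the single-slot computation, requiring only $K \ll T$. Both arguments are valid (the union bound needs no independence), but the paper's is sharper: it avoids the extra factor of $M$ and the correspondingly stronger assumption $KM \ll T$ that your version relies on. Since $M$ is the number of ad slots and is typically a small constant, your assumption is harmless in practice, but you should state it explicitly rather than inherit it silently from the single-slot lemma, which only assumed $K \ll T$.
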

%\begin{proof}
\emph{Proof:} The proof has some subtle differences from \Cref{lemma:good-event} because in the multi-slot extension, the events $B_{i,m,t}$ are not independent across the slots.\\
\emph{Observation:} If an element $\omega$ from the set of outcomes is such that $ \omega \in B_{i,m,t}$, then $\omega \in B_{i,m',t} \; \forall m' \in [M]$. This is because, for any two slots $m$ and $m'$,
\begin{align*}
 W_{i,m} \notin  [\widehat{W}_{i,m,t}^-, \widehat{W}_{i,m,t}^+] 
 &\iff \mu_{i} \notin  [\widehat{\mu}_{i,t}^-, \widehat{\mu}_{i,t}^+]\\
  & \iff W_{i,m'} \notin  [\widehat{W}_{i,m',t}^-, \widehat{W}_{i,m',t}^+] 
\end{align*}
Therefore $P(\bigcup_m B_{i,m,t}) = P(B_{i,1,t})$.
From \Cref{lemma:ucb-index-multi},\\ $P(\bigcup_m B_{i,m,t}) = P(B_{i,1,t}) \leq 2T^{-4}$. Hence, 
\begin{align*}
P(G) &= 1 - P\left(\bigcup\limits_t \bigcup\limits_i \bigcup\limits_m B_{i,m,t}\right) = 1 - P\left(\bigcup\limits_t \bigcup\limits_i B_{i,1,t}\right) \\
&\geq 1 - \frac{2}{T^2} \text{ (as in  \Cref{lemma:good-event})}.
\end{align*}
%\end{proof}

\begin{theorem}
\label{theorem:pull_best_arm-multi}
Suppose at time $t$, $N_{j,t}>8\theta_{max}^2 \log T/\Delta^2 \;  \forall j \in [K]$. Then $ \forall m \in [M], \forall i \in  [K]\setminus S_{\Delta, m}$, 
$\widehat{W}_{K^{(m)}, m, t}^+ > \widehat{W}_{i, m, t}^+$ with high probability ($= 1 - 2/T^4$).
\end{theorem}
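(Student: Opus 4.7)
My plan is to mimic the proof of \Cref{theorem:pull_best_arm} almost verbatim, with the slot index $m$ appearing as a passive parameter throughout. Since the multi-slot mechanism has already been shown to be DSIC, I may assume $b_i = v_i$ for every agent $i$. Fix a slot $m$ and an agent $i \in [K] \setminus S_{\Delta, m}$, so that by definition of $S_{\Delta,m}$, $W_{*,m} - W_{i,m} \geq \Delta$. I will bound the probability of the event $\{\widehat{W}_{i,m,t}^+ \geq \widehat{W}_{K^{(m)}, m, t}^+\}$ and show it is at most $2/T^4$.

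Suppose this event holds. I claim at least one of the following three conditions must occur: (1) $W_{i,m} < \widehat{W}_{i,m,t}^-$, (2) $W_{*,m} > \widehat{W}_{K^{(m)},m,t}^+$, or (3) $W_{*,m} - W_{i,m} < 2\epsilon_{i,m,t}$. Indeed, if all three fail, then chaining the inequalities gives
\begin{align*}
\widehat{W}_{K^{(m)},m,t}^+ \;>\; W_{*,m} \;>\; W_{i,m} + 2\epsilon_{i,m,t} \;\geq\; \widehat{W}_{i,m,t}^- + 2\epsilon_{i,m,t} \;=\; \widehat{W}_{i,m,t}^+,
\end{align*}
which contradicts the assumption. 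This is precisely the same three-way case split used in the single-slot proof, with the slot-specific confidence half-width $\epsilon_{i,m,t} = \Gamma_m v_i \sqrt{2 \log T / N_{i,t}}$.

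The hypothesis $N_{j,t} > 8 v_{max}^2 \log T / \Delta^2$ for all $j$ lets me invoke \Cref{lemma:epsilon-delta-relation-multi} to conclude $2\epsilon_{i,m,t} < \Delta \leq W_{*,m} - W_{i,m}$, so Condition 3 cannot hold. Hence the sub-optimal ordering can only arise if either Condition 1 or Condition 2 occurs, i.e., if either $B_{i,m,t}$ or $B_{K^{(m)},m,t}$ occurs (using the notation of \Cref{lemma:good-event-multi}). A union bound together with \Cref{lemma:ucb-index-multi} gives
\begin{align*}
P\!\left(\widehat{W}_{i,m,t}^+ \geq \widehat{W}_{K^{(m)},m,t}^+\right) \;\leq\; P(B_{i,m,t}) + P(B_{K^{(m)},m,t}) \;\leq\; 2/T^4,
\end{align*}
which yields the stated bound $P(\widehat{W}_{K^{(m)},m,t}^+ > \widehat{W}_{i,m,t}^+) \geq 1 - 2/T^4$.

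The only nontrivial point is verifying that the $\Gamma_m$ factor baked into the multi-slot confidence width does not disturb the elimination of Condition 3; this is exactly the content of \Cref{lemma:epsilon-delta-relation-multi}, which bounds $2\epsilon_{i,m,t}$ by $\Delta$ uniformly in $m$ (using $\Gamma_m \leq 1$ and $v_i \leq v_{max}$). Apart from that, the structural argument is identical to the single-slot case, so I expect no real obstacle; the work is essentially bookkeeping to carry the slot index $m$ through each estimate and to cite the multi-slot analogs of \Cref{lemma:ucb-index} and \Cref{lemma:epsilon-delta-relation}.
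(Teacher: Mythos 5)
Your proof is correct and follows essentially the same route as the paper's: the same three-condition case split, elimination of Condition 3 via \Cref{lemma:epsilon-delta-relation-multi}, and a union bound over the two remaining bad events using \Cref{lemma:ucb-index-multi}. The only thing the paper adds is a remark that, unlike the single-slot case, the hypothesis $\widehat{W}_{i,m,t}^+ \geq \widehat{W}_{K^{(m)},m,t}^+$ is not tied to any allocation event, but this does not change the argument.
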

\emph{Proof:} Suppose at time $t$ where $N_{j,t} > 8 \theta_{max}^2 \log T/\Delta^2 \;\forall j \in [K], $ there exists some $ m \in [M] $ such that  $\widehat{W}_{K^{(m)}, m, t}^+ < \widehat{W}_{i, m, t}^+$. (Note that this statement does not arise from any assumptions on the allocation, for instance, that agent $i$ is given slot $m$. This is the major difference from \Cref{theorem:pull_best_arm}). But the relation between the true social welfare values of these agents is  $W_{K^{(m)}, m} >  W_{i, m}$. Then one of the following three conditions must have occurred, like in proof of \Cref{theorem:pull_best_arm}. \\
%\begin{itemize}
\textbf{Condition 1: $W_{i,m} < \widehat{W}_{i,m,t}^-$}.  This condition implies a drastic overestimate of the sub-optimal arm $i$ so that the true mean social welfare $W_{i,m}$ is even below the LCB index  $\widehat{W}_{i,m,t}^-$. The figure below captures this condition.
\begin{figure}[h!]
\centering
%\vspace{-4mm}
\begin{tikzpicture}%[font=\large]
\coordinate (xbegin) at (0,0);
\coordinate (p1) at (0.5,0);
\coordinate (p2) at (1.5,0);
\coordinate (p3) at (4.5,0);
\coordinate (xend) at (6,0);
\coordinate (d) at (0,0.13);
\coordinate (d1) at (0,0.25);
\coordinate (dx) at (0.2, 0);
\draw [-,=>latex] ($ (p1) + (d)$) --($ (p1) - (d)$);

\draw [-,=>latex] ($ (p2) + (d1)$) --($ (p2) - (d1)$);
\draw [-,=>latex] ($ (p2) + (d1) $) --($ (p2) + (d1) +(dx) $);
\draw [-,=>latex] ($ (p2) - (d1) $) --($ (p2) - (d1) +(dx) $);

\draw [-,=>latex] ($ (p3) + (d1)$) --($ (p3) - (d1)$);
\draw [-,=>latex] ($ (p3) + (d1) $) --($ (p3) + (d1) -(dx) $);
\draw [-,=>latex] ($ (p3) - (d1) $) --($ (p3) - (d1) -(dx) $);

\node [above] at ($ (p1) + (d)$) {$W_{i,m}$};
\node [below] at ($ (p2) - (d1)$) {$\widehat{W}_{i,m,t}^-$};
\node [below] at ($ (p3) - (d1)$) {$\widehat{W}_{i,m,t}^+$};

\draw [-,=>latex] (xbegin)--(xend);

\end{tikzpicture}
%\vspace{-5mm}
\caption{Condition 1, Proof of \Cref{theorem:pull_best_arm-multi}}
\label{lbl4}
\end{figure}\\
\textbf{Condition 2: $ W_{K^{(m)},m} > \widehat{W}_{K^{(m)},m, t}^{+}$}. This implies an underestimate of the optimal arm so that the true mean social welfare 
$W_{K^{(m)},m} $ lies above even the UCB index $\widehat{W}_{K^{(m)},m,t}^+$. See \Cref{lbl5} below.\\
\begin{figure}[h!]
\centering
%\vspace{-5mm}
\begin{tikzpicture}%[font=\large]
\coordinate (xbegin) at (0,0);
\coordinate (p2) at (0.5,0);
\coordinate (p3) at (3.5,0);
\coordinate (p1) at (4.5,0);
\coordinate (xend) at (6,0);
\coordinate (d) at (0,0.13);
\coordinate (d1) at (0,0.25);
\coordinate (dx) at (0.2, 0);
\draw [-,=>latex] ($ (p1) + (d)$) --($ (p1) - (d)$);

\draw [-,=>latex] ($ (p2) + (d1)$) --($ (p2) - (d1)$);
\draw [-,=>latex] ($ (p2) + (d1) $) --($ (p2) + (d1) +(dx) $);
\draw [-,=>latex] ($ (p2) - (d1) $) --($ (p2) - (d1) +(dx) $);

\draw [-,=>latex] ($ (p3) + (d1)$) --($ (p3) - (d1)$);
\draw [-,=>latex] ($ (p3) + (d1) $) --($ (p3) + (d1) -(dx) $);
\draw [-,=>latex] ($ (p3) - (d1) $) --($ (p3) - (d1) -(dx) $);

\node [above] at ($ (p1) + (d)$) {$\widehat{W}_{K^{(m)},m}$};
\node [below] at ($ (p2) - (d1)$) {$\widehat{W}_{K^{(m)},m,t}^-$};
\node [below] at ($ (p3) - (d1)$) {$\widehat{W}_{K^{(m)},m,t}^+$};

\draw [-,=>latex] (xbegin)--(xend);
\end{tikzpicture}
\caption{Condition 2, Proof of \Cref{theorem:pull_best_arm-multi}}
\label{lbl5}
%\vspace{-5mm}
\end{figure}\\
\textbf{Condition 3:} $W_{K^{(m)},m} - W_{i,m} < 2\epsilon_{i,m,t}$. This implies an overlap in the confidence intervals of the optimal and sub-optimal arm. Even if, Conditions 1 and 2 are false, still the UCB of sub-optimal arm $i$ is greater than the UCB of the optimal arm $i_*$.
\begin{figure}[h!]
\centering
%\vspace{-3mm}
\begin{tikzpicture}%[font=\large]
\coordinate (xbegin) at (0,0);
\coordinate (p1) at (0.5,0);
\coordinate (p2) at (1.5,0);
\coordinate (p3) at (0.9,0);
\coordinate (p4) at (4.7,0);
\coordinate (p5) at (5.3,0);
\coordinate (p6) at (7.5,0);
\coordinate (xend) at (8,0);

\coordinate (d) at (0,0.13);
\coordinate (d1) at (0,0.25);
\coordinate (dx) at (0.2, 0);
\draw [-,=>latex] (xbegin)--(xend);

\draw [-,=>latex] ($ (p1) + (d1)$) --($ (p1) - (d1)$);
\draw [-,=>latex] ($ (p1) + (d1) $) --($ (p1) + (d1) +(dx) $);
\draw [-,=>latex] ($ (p1) - (d1) $) --($ (p1) - (d1) +(dx) $);

\draw [-,=>latex] ($ (p2) + (d1)$) --($ (p2) - (d1)$);
\draw [-,=>latex] ($ (p2) + (d1) $) --($ (p2) + (d1) +(dx) $);
\draw [-,=>latex] ($ (p2) - (d1) $) --($ (p2) - (d1) +(dx) $);

\draw [-,=>latex] ($ (p3) + (d)$) --($ (p3) - (d1)$);

\draw [-,=>latex] ($ (p4) + (d)$) --($ (p4) - (d1)$);

\draw [-,=>latex] ($ (p5) + (d1)$) --($ (p5) - (d1)$);
\draw [-,=>latex] ($ (p5) + (d1) $) --($ (p5) + (d1) -(dx) $);
\draw [-,=>latex] ($ (p5) - (d1) $) --($ (p5) - (d1) -(dx) $);

\draw [-,=>latex] ($ (p6) + (d1)$) --($ (p6) - (d1)$);
\draw [-,=>latex] ($ (p6) + (d1) $) --($ (p6) + (d1) -(dx) $);
\draw [-,=>latex] ($ (p6) - (d1) $) --($ (p6) - (d1) -(dx) $);

\node [above] at ($ (p1) + (d)$) {$\widehat{W}_{i,m,t}^-$};
\node [above right] at ($ (p2) + (d) + (-0.3,0) $)
{$\widehat{W}_{K^{(m)},m,t}^-$};
\node [above right] at ($ (p5) + (d) + (-0.3,0) $)
{$\widehat{W}_{K^{(m)},m,t}^+$};
\node [above] at ($ (p6) + (d)$) {$\widehat{W}_{i,m,t}^+$};

%\node [below] at ($ (p2) - (d1)$) {$\widehat{W}_{i_*,t}^-$};
\node [below] at ($ (p3) - (d1)$) {$W_{i,m}$};
\node [below] at ($ (p4) - (d1)$) {$W_{K^{(m)},m}$};

\draw [-,=>latex] (xbegin)--(xend);
\end{tikzpicture}
%\vspace{-3mm}
\caption{Condition 3, Proof of \Cref{theorem:pull_best_arm-multi}}
\label{lbl6}
\end{figure}
From the figure,
$
W_{K^{(m)},m} - W_{i,m} \leq
\widehat{W}_{i,m,t}^+ - \widehat{W}_{i,m,t}^-  \leq  \; 2 \epsilon_{i, m, t} 
$.
%The inequality in the last line is due to the fact that 
%$ \widehat{W}_{i,t}^-<\widehat{W}_{i_*,t}^+ $. If not, the confidence intervals would not overlap. 
%\end{itemize}
If all the three conditions above were false, then, 
\begin{align*}
\widehat{W}_{K^{(m)},m,t}^+ &> W_{K^{(m)},m}  >  W_{i,m} + 2 \epsilon_{i,t}
   > \widehat{W}_{i,m,t}^- +  2 \epsilon_{i,t} \\
   & = \widehat{W}_{i,m, t}^+ \;\;\;\text{ ( A contradiction!)}
\end{align*}
%This implies that $\widehat{W}_{K^{(m)},m,t}^+ >  \widehat{W}_{i,m, t}^+ $, leading to a contradiction.
As per the statement of the theorem, $N_{i,t} > 8 \theta_{max}^2 \log T/\Delta^2$.  Therefore by \Cref{lemma:epsilon-delta-relation-multi}, $2\epsilon_{i,m,t} < \Delta$. For agent $i \in [K]\setminus S_{\Delta,m}$, $W_{K^{(m)},m} - W_{i, m} > \Delta > 2\epsilon_{i,m,t}$. Therefore, Condition 3 above does not hold true. So,
\begin{align*}
P(\widehat{W}_{i,m,t}^+ >& \widehat{W}_{K^{(m)},m,t}^+) \leq P(\text{Condition 1}) + P(\text{Condition 2}) \\
& \leq 0.5 P(B_{i,m,t}) +  0.5P(B_{K^{(m)},m,t}) \leq 2/T^{-4}
\end{align*}
\begin{align*}
P(\widehat{W}_{K^{(m)},m, t}^+ >\widehat{W}_{i,m, t}^+ ) &= 1- P(\widehat{W}_{i,m, t}^+ > \widehat{W}_{K^{(m)},m,t}^+) \geq 1 - \frac{2}{T^4}
\end{align*}
\begin{theorem}
If the $\Delta$-UCB mechanism is executed in the multiple slot scenario for a total time horizon of $T$ rounds, it achieves an expected $\Delta$-regret of $O(\log T)$.
\label{theorem:Delta-ucb-regret-multi}
\end{theorem}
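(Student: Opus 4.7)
The plan is to mirror the single-slot regret analysis (\Cref{theorem:Delta-ucb-regret}) and condition on the good event $G$ from \Cref{lemma:good-event-multi}. Write
\[
\mathbb{E}[\Delta\text{-regret}] = \mathbb{E}[\Delta\text{-regret}\mid G]\,P(G) + \mathbb{E}[\Delta\text{-regret}\mid G^c]\,P(G^c).
\]
The $G^c$ term is handled trivially: in any round each slot contributes at most $v_{max}$ to the $\Delta$-regret, so $\mathbb{E}[\Delta\text{-regret}\mid G^c]\le T M v_{max}$, and since $P(G^c)\le 1/T^2$ by \Cref{lemma:good-event-multi}, this contribution is $O(Mv_{max}/T)=o(1)$.

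Under $G$, I split the horizon into the $u=8Kv_{max}^2\log T/\Delta^2$ exploration rounds and the exploitation rounds. Exploration contributes at most $u\cdot M v_{max} = O(KMv_{max}^3 \log T/\Delta^2)$ to the $\Delta$-regret, which is already $O(\log T)$. The real work is to show that, conditional on $G$, every exploitation round incurs \emph{zero} $\Delta$-regret, i.e., $\widehat{K}^{(m)}\in S_{\Delta,m}$ for every slot $m\in[M]$.

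The key step (the main obstacle) is the following pigeonhole argument, which uses the DSIC property (so $b_i=v_i$) together with \Cref{theorem:pull_best_arm-multi}. After the exploration phase, $N_{i,u}\ge 8v_{max}^2\log T/\Delta^2$ for every $i$, so by \Cref{lemma:epsilon-delta-relation-multi} the sort key $\widehat{s}_i := \widehat{\mu}_{i,u}^+ v_i$ satisfies, under $G$,
\[
\mu_i v_i \;\le\; \widehat{s}_i \;\le\; \mu_i v_i + \Delta.
\]
Suppose for contradiction that $\widehat{K}^{(m)}\notin S_{\Delta,m}$ for some $m$. Then $\mu_{K^{(m)}}v_{K^{(m)}}-\mu_{\widehat{K}^{(m)}}v_{\widehat{K}^{(m)}}\ge \Delta/\Gamma_m\ge \Delta$, so $\widehat{s}_{K^{(m)}}\ge \widehat{s}_{\widehat{K}^{(m)}}$. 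The same inequality holds with $K^{(m)}$ replaced by any $K^{(\ell)}$ for $\ell\le m$, since $\mu_{K^{(\ell)}}v_{K^{(\ell)}}\ge \mu_{K^{(m)}}v_{K^{(m)}}$. Hence $K^{(1)},\dots,K^{(m)}$ all lie at sort positions $\le m$. Each of them is in $S_{\Delta,m}$ (trivially for $K^{(m)}$; for $\ell<m$ because $W_{*,m}-W_{K^{(\ell)},m}\le 0<\Delta$), whereas by assumption $\widehat{K}^{(m)}\notin S_{\Delta,m}$, so $\widehat{K}^{(m)}\ne K^{(\ell)}$ for any $\ell\le m$. This yields $m+1$ distinct agents occupying only $m$ positions, a contradiction.

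Combining the three bounds gives
\[
\mathbb{E}[\Delta\text{-regret}] \;\le\; \frac{8KMv_{max}^3\log T}{\Delta^2}\cdot 1 \;+\; Tv_{max}M\cdot\frac{1}{T^2} \;=\; O(\log T),
\]
which completes the proof. The step I expect to occupy the bulk of the write-up is the pigeonhole argument above; everything else is a routine transcription of the single-slot calculation with $\Gamma_m$'s inserted and an extra factor of $M$ absorbed into the $O(\cdot)$.
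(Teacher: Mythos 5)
Your proposal is correct, and its overall skeleton (condition on $G$ versus $G^c$ from \Cref{lemma:good-event-multi}, charge $u\cdot Mv_{max}$ to exploration, show zero $\Delta$-regret in exploitation under $G$) matches the paper exactly; the interesting difference is in how you establish the zero-exploitation-regret claim. The paper fixes a slot $m$ with $I_{t,m}\notin S_{\Delta,m}$, notes $|S_{\Delta,m}|\geq m$ so some member of $S_{\Delta,m}$ is still unallocated, and then splits into two cases according to whether $K^{(m)}$ itself has already been taken by an earlier slot; the second case requires transporting the separation from slot $m$ to a higher slot $j$ via $\Gamma_j>\Gamma_m$ and then invoking the UCB-comparison result (\Cref{theorem:pull_best_arm-multi}) at slot $j$. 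You instead bypass \Cref{theorem:pull_best_arm-multi} entirely (your citation of it is vestigial) and argue globally about the sorted list: under $G$ and after exploration the sort key $\widehat{s}_i=\widehat{\mu}_{i,u}^+v_i$ is sandwiched in $[\mu_i v_i,\;\mu_i v_i+\Delta)$, so if $\widehat{K}^{(m)}\notin S_{\Delta,m}$ then all of $K^{(1)},\dots,K^{(m)}$ have strictly larger keys and must occupy positions $1,\dots,m-1$, a contradiction by counting. This is cleaner and makes the role of $\Delta$ more transparent; what the paper's sequential case analysis buys is that it reuses the already-proved slot-wise machinery verbatim. One small precision point in your write-up: state the upper half of the sandwich as strict, $\widehat{s}_i<\mu_i v_i+\Delta$ (which \Cref{lemma:epsilon-delta-relation-multi} gives, since $\widehat{s}_i\leq \mu_i v_i+2\epsilon_{i,u}v_i$ under $G$). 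With only the non-strict version, a tie $\widehat{s}_{K^{(\ell)}}=\widehat{s}_{\widehat{K}^{(m)}}$ could let the sort place $K^{(\ell)}$ below position $m$ and your claim that all of $K^{(1)},\dots,K^{(m)}$ land at positions $\leq m$ would not follow; the strict inequality closes this.
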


\begin{proof}
The proof idea has some subtle differences from the proof of \Cref{theorem:Delta-ucb-regret}. 
%The main idea in the proof is to compute the $\Delta$-regret conditional on two events - $G$ and $G^c$ and then to find a bound for these two conditional expectations.
As before, we first compute the expected $\Delta$-regret conditional on $G$.
%\begin{align*}
%\mathbb{E}&\left[ \Delta-regret |G \right] = \mathbb{E}\left[ \Delta-regret | \forall t, \forall m , \forall i \; W_{i,m} \in [\widehat{W}_{i,m,t}^-,\widehat{W}_{i,m,t}^+ ] \right] \\
%&=\mathbb{E}\left[ \sum_{t=1}^T \sum_{m=1}^M \left(W_{K^m} - W_{I_t} \right) \mathbbm{1}\left[ I_t \in [K]\setminus S_\Delta\right] |\forall t, \forall i \; W_{i} \in [\widehat{W}_{i,t}^-,\widehat{W}_{i,t}^+ ]  \right] \\
%&=\mathbb{E}\left[ \sum_{t=1}^T \left(W_* - W_{I_t} \right) \mathbbm{1}\left[ I_t \in [K]\setminus S_\Delta\right] | W_{I_t} \in [\widehat{W}_{I_t,t}^-,\widehat{W}_{I_t,t}^+ ]  \right]\\
%&\leq \frac{32K \theta_{max}^3 \log T}{\Delta^2} 
%\end{align*}
For the exploration rounds, the mechanism obtains a regret of $\xi=\frac{8MK \theta_{max}^3 \log T}{\Delta^2}$. 
\begin{align*}
\mathbb{E}&\left[ \Delta\text{-regret}|G \right] \leq  
 \xi +  \sum_{t=\gamma+1}^T \sum_{m=1}^M (W_{K^{(m),m}} - W_{(I_{t,m}),m})
\mathbbm{1}\left[I_{t,m} \in K \setminus S_{\Delta,m}|G\right] 
\end{align*}
%The last step comes from the fact that Conditions 1 and 2 in the proof of \Cref{theorem:pull_best_arm} are eliminated  as we are given that the event $G$ has occurred.
%From \Cref{lemma:epsilon-delta-relation}, Condition 3 can occur only for  $\frac{32K \theta_{max}^2 \log T}{\Delta^2}$ rounds and the maximum regret is $\theta_{max}$ in each such round.
We will now show that the second term above evaluates to zero. 
For any $m$, the cardinality of $S_{\Delta,m}$ is at least $m$. This is because for all $K^{(j)}$ above $m$ in the ranking of agents ($j<m$), $W_{K^{(m)}, m} - W_{K^{(j)},m} < 0 < \Delta$ as $W_{K^{(j)},m}> W_{K^{(m)}, m}$. Therefore there are at least $m-1$ agents in $S_{\Delta,m}$. Also $K^{(m)} \in  S_{\Delta,m}$ as $W_{K^{(m)}, m} - W_{K^{(m)},m}=0 < \Delta$. Therefore $ \forall j \in \{1, \ldots, m\}, K^{(j)} \in S_{\Delta,m}$. While allocating slot $m$, at least one of the agents in $S_{\Delta,m}$ must be free. This is by the pigeonhole principle.
  Now if the allocated agent for slot $m$, $I_{t,m} \in [K] \setminus S_{\Delta,m}$, one of the following two cases occur.\\
%\begin{itemize}
\textbf{Case 1:} The ideal agents $K^{(1)}, \ldots, K^{(m-1)}$ for all the previous slots ${1, \ldots, m-1}$ have already been allocated before the allocation of slot $m$. This means that $K^{(m)}$ has not been allocated yet. Also, 
%\begin{align}
$\widehat{W}_{(I_{t,m}),m, \gamma}^+ > \widehat{W}_{K^{(m)},m, \gamma}^+ $.
%\end{align}
Since $G$ is true and $t > \gamma$, the above event cannot occur (by \Cref{theorem:pull_best_arm-multi}).\\
\textbf{Case 2:} The agent $K^{(m)}$ has already been allocated to some other slot before the allocation of slot $m$ has begun. Therefore there is some agent $K^{(j)}, j<m$ with a larger social welfare value, who has still not been allocated. That is, $W_{K^{(j)},m} > W_{K^{(m)},m} > W_{(I_{t,m}),m}$. Given that $I_{t,m} \notin S_{\Delta,m}$.
Therefore we can deduce that $I_{t,m} \notin S_{\Delta,j}$.
This is because,
\begin{align}
&W_{K^{(m)},m}  - W_{(I_{t,m}),m} \geq   \Delta 
\nonumber \\& \implies W_{K^{(j)},m} - W_{(I_{t,m}),m} \geq   \Delta
 \nonumber\\& \implies \mu_{K^{(j)}} \theta_{K^{(j)}} -  \mu_{I_{t,m}} \theta_{I_{t,m}}  \geq   \Delta/\Gamma_m
\nonumber\\&  \implies \Gamma_j (\mu_{K^{(j)}} \theta_{K^{(j)}} -  \mu_{I_{t,m}} \theta_{I_{t,m}})  \geq  \Gamma_j \Delta/\Gamma_m \nonumber \\& \implies W_{K^{(j)},j} - W_{(I_{t,m}),j} \geq   \Delta  
\end{align}
The last line in the above implications is true as $\Gamma_j > \Gamma_m $. But $\widehat{W}_{K^{(j)},m, \gamma}^+ < \widehat{W}_{(I_{t,m}),m, \gamma}^+$. Then the inequality
$\widehat{W}_{K^{(j)},j, \gamma}^+ < \widehat{W}_{(I_{t,m}),j, \gamma}^+$ is also true due to the way the slot specific UCB indices are computed. 
From \Cref{theorem:pull_best_arm-multi} for slot $j$, we find that $\widehat{W}_{K^{(j)},j, \gamma}^+ > \widehat{W}_{(I_{t,m}),j, \gamma}^+$. Again this cannot happen as $G$ is true and $t > \gamma$.
Therefore we get that $\mathbb{E}\left[ \Delta\text{-regret}|G \right] \leq  
 \xi$. \\
%\end{itemize}
%From \Cref{lemma:good-event-multi}, $P(G) > 1 - \frac{1}{T^2}$. But we know that $P(G)<1$.
%We now compute $\mathbb{E}\left[ \Delta\text{-regret} |G^c \right]$.
%\begin{align}
%\mathbb{E}\left[\Delta\text{-regret} |G^c \right] \leq TM \theta_{max}
%\end{align}
Also, $P(G^c) = 1 - P(G)< \frac{2}{T^2}$ from \Cref{lemma:good-event-multi}. \\
Putting all the steps together,
\begin{align}
\mathbb{E}\left[ \Delta\text{-regret}\right] &= \mathbb{E}\left[ \Delta\text{-regret} |G \right] P(G) + \mathbb{E}\left[ \Delta\text{-regret} |G^c \right] P(G^c) \nonumber \\
 & \leq \frac{8KM\theta_{max}^3 \log T}{\Delta^2} * 1 + 
    TM\theta_{max}*\frac{2}{T^2}  \nonumber\\
  & \leq \frac{8KM\theta_{max}^3 \log T}{\Delta^2} + 2\theta_{max}
\end{align}
The simplification in the second line is because $\mathbb{E}\left[\Delta\text{-regret} |G^c \right]$ $ \leq TM \theta_{max}$. In the last line we use the fact that  $M \ll T$.
%We use the fact that $M \ll T$ in the last line and \\$\mathbb{E}\left[\Delta\text{-regret} |G^c \right]$ $ \leq TM \theta_{max}$ in the second line.
This completes the proof.
\end{proof}
\section{Extensions to Other Variants of Multi-slot SSA}
In this section, we look at other variants in the multi-slot SSA setting and discuss how our mechanism can be adapted to such settings.
\subsection{Position and Ad Dependent Cascade Model}
We have explained our algorithm and performed the analysis for the position dependent cascade model for SSA where the $\Gamma_m$ function is characterized by \Cref{eq:gamma_computation} and is known to the planner a-priori.
A more general model would be one where the function $\Gamma_m$ may also depend on the ad displayed at position $m$. Our model can also be used in such scenarios and the same analysis will hold.
\subsection{Handling the Case of Unknown $\Gamma_m$}
\label{sec:unknown_gamma}
We have assumed that the functions $\Gamma_m$s are known to the planner a-priori. Now suppose that the $\Gamma_m$s are required to be learnt. The same allocation scheme as in \Cref{alg:delta-ucb-multi-slot} may be used. However the computation of the proposed payment scheme in \Cref{alg:delta-ucb-multi-slot}  is not feasible as the payments use $\Gamma_m$s, which are unknown. %Myerson's payment scheme may be applied in such scenarios thereby leading to randomized mechanisms truthful in expectation, as adopted in \cite{gatti2015truthful}. However we are interested in deterministic mechanisms where there no variance in payments. Hence this problem is still open.

In order to handle such a scenario, we must obtain estimates for $\Gamma$ first. It is known that, the parameter for the first slot, $\Gamma_1 = 1$. 
Only $\Gamma_2, \ldots, \Gamma_M$ need to be estimated. 
We will first describe a mechanism which relies on  an arbitrary  learning algorithm  to provide estimates $ \widehat{\Gamma}_2, \ldots,\widehat{\Gamma}_M$. Thereafter we will remark on the possible learning schemes. 
\begin{proposition}
\label{prop-estimates}
Suppose we have a learning scheme that gives us estimates
$\widehat{\Gamma}_2, \ldots,\widehat{\Gamma}_M$ such that, 
$\widehat{\Gamma}_2 \geq \widehat{\Gamma}_3 \geq \ldots \geq \widehat{\Gamma}_M$ and $0 \leq \widehat{\Gamma}_m \leq 1 \text{ for } m = 2, \ldots, M$. Let $\widehat{\Gamma}_1 = 1$.
\end{proposition}
We propose a weighted VCG mechanism [\cite{NISAN2007}] which is known to be DSIC truthful and is also IR. Suppose the private valuation of agent $i$ for a click is $\theta_i$. Let $x \in \{0,1\}^{K\times M}$ be an outcome of the allocation. $x_{im} = 1$ if ad $i$ is alloted slot $m$ and zero otherwise. The valuation function of agent $i$ in this case is,
\begin{align}
v_i(x, \theta_i) = \sum_{m=1}^M \Gamma_m \mu_i \theta_i x_{im}
\end{align}
Define a weight vector $w_i \in \mathcal{R}^M$ for every agent $i$. $w_i$ has  weights corresponding to agent $i$ and slot $m$ such that, $w_{i,m} = \frac{\widehat{\mu}_i^+ \widehat{\Gamma}_m}{\mu_i \Gamma_m}$. $\widehat{\mu}_i^+$ is the UCB index corresponding to the CTR of ad $i$, computed after the fixed number of exploration rounds as in \Cref{alg:delta-ucb-multi-slot}. However, in this scenario, the UCB index is constructed using samples of the clicks from allocation in the first slot alone. 
\begin{figure}
\begin{tcolorbox}

First, $\gamma = 8K\theta_{max}^2\log T/\Delta^2$ exploration rounds are performed free for all agents as in \Cref{alg:delta-ucb-multi-slot}. At every time step $t$, UCB indices for every ad $i$, ($\widehat{\mu}_{i,t}^{-} \text{ and } \widehat{\mu}_{i,t}^{+}$) are computed using the update in \Cref{alg:delta-ucb-multi-slot}, but using only samples from the allocation of ad $i$ to slot $1$.

Thereafter, in every round $t$, our weighted VCG mechanism uses the allocation,
\begin{align*}
A^*(b_i, b_{-i}) = \argmax\limits_x \sum_{i=1}^K \sum_{m=1}^M  \Gamma_m \mu_i b_i x_{im} w_{i,m} 
\end{align*} 

The payment for an agent $i$ allocated slot $m'$ where $1 \leq m' \leq M$ is,
\begin{align*}
P_i^t(b,\rho) = \frac{\rho_i(t)}{\widehat{\mu}_{i,t}^+ \widehat{\Gamma}_{m'}} \sum_{j \neq i} \sum_{m=m'+1}^{M+1} \widehat{\mu}_{j,t}^+ b_j x_{jm} (\widehat{\Gamma}_{m-1} - \widehat{\Gamma}_m )
\end{align*}
where $\widehat{\Gamma}_{M+1} = 0$.
\end{tcolorbox}
\caption{$\Delta$-UCB Mechanism for the Position Dependent Cascade Model using Estimates for $\Gamma_m$s}
\label{multi-slot-unknown-lambda-mech}
\end{figure}
Our weighted VCG mechanism is described in \Cref{multi-slot-unknown-lambda-mech}. The mechanism uses the allocation,
\begin{align*}
A^*(b_i, b_{-i}) = \argmax\limits_x \sum_{i=1}^K \sum_{m=1}^M  \Gamma_m \mu_i b_i x_{im} w_{i,m} 
\end{align*} 
But note that this allocation rule boils down to the same  allocation used in \Cref{alg:delta-ucb-multi-slot}. This is due to the fact that the estimates $\widehat{\Gamma}_m$ monotonically decrease with $m$. 
The procedure for obtaining the allocation $A^*(b_i, b_{-i})$ is the following. We sort the agents based on $\widehat{\mu}_i^+ b_i$ and allocate the slots to the best $M$ agents.
Therefore, the allocation rule is independent of the $\Gamma$s and is equivalent to,
\begin{align*}
A^*(b_i, b_{-i}) = \argmax\limits_x \sum_{i=1}^K \sum_{m=1}^M  \widehat{\mu}_i^+ b_i x_{im} 
\end{align*} 
The expected payment to be made by agent $i$ when allocated a slot $m'$ is,
\begin{align*}
\mathbb{E}[P_i^t(b, \rho)] = \frac{\mu_i \Gamma_{m'}}{\widehat{\mu}_{i,t}^+ \widehat{\Gamma}_{m'}} \sum_{j \neq i} \sum_{m=m'+1}^{M+1} \widehat{\mu}_{j,t}^+ b_j x_{jm} (\widehat{\Gamma}_{m-1} - \widehat{\Gamma}_m )
\end{align*}
The above is the externality based payment prescribed by weighted VCG.
However since we adopt the pay per click scheme,
\begin{align*}
P_i^t(b,\rho)] = \frac{\rho_i(t)}{\widehat{\mu}_{i,t}^+ \widehat{\Gamma}_{m'}} \sum_{j \neq i} \sum_{m=m'+1}^{M+1} \widehat{\mu}_{j,t}^+ b_j x_{jm} (\widehat{\Gamma}_{m-1} - \widehat{\Gamma}_m )
\end{align*}
Therefore, the  computation of the payments is also feasible now. The above mentioned weighted VCG scheme is DSIC truthful and IR. The proof follows from the standard weighted VCG scheme where the weights are as defined as above. We now remark on the $\Delta$-regret of the mechanism. 
%The scheme also yields logarithmic $\Delta$-regret, the proof of which is similar to \Cref{theorem:Delta-ucb-regret-multi}.
\subsubsection{Remarks on Learning  $\widehat{\Gamma}_m$ and  Computation of $\Delta$-regret}
In the above mechanism we have assumed, that the estimates $\widehat{\Gamma}_{m}$ satisfy \Cref{prop-estimates}. The allocation scheme described above ultimately does not rely on these estimates, although the weights $w_{i,m}$ use it. The mechanism therefore uses the estimates only in the payment rule. 
We now make an important observation here.\\
\textit{Observation:} When any set of estimates $\{\widehat{\Gamma}_{m}\}$, $m=1, \ldots, M$  satisfying \Cref{prop-estimates} is used in the mechanism above,  the mechanism is DSIC truthful, IR and suffers only logarithmic $\Delta$-regret.

The reason is that the mechanism is an instance of weighted VCG mechanism and therefore is DSIC truthful and IR, with any estimate for the $\Gamma_m$s. As far as the $\Delta$-regret in social welfare is concerned, the allocation rule determines it. The allocation rule used turns out to be identical to the allocation rule used where $\Gamma_m$ is known and is independent of the estimates. Note that it is now possible to minimise regret in payments by choosing estimates  $\widehat{\Gamma}_m$ that maximise the payments and also satisfy the constraints in \Cref{prop-estimates}. This will lead to a constrained optimization problem which can be solved.  However the current work focuses on minimizing $\Delta$-regret in social welfare and therefore the problem of minimising regret in payments is still open.
\section{Conclusion}
We have studied the more practical use case in MAB mechanisms where a planner has the option to specify a tolerance level $\Delta$ for sub-optimal arms. 
All the papers in the literature on MAB mechanisms propose schemes to target the worst case scenario where the arms are arbitrarily close. Therefore they prescribe investing a huge number of exploration rounds ($\Omega(T^{2/3})$) to perfectly distinguish the arms. However, the planner may not want to perfectly distinguish arms that are arbitrarily close. Many a time, the planner may instead be willing to allocate arms that are at most $\Delta$ away from the best arm. The state of the art does not permit this flexibility to the planner. Towards providing such a flexibility to the planner, we have, for the first time, introduced a new notion of regret called $\Delta$-regret. When arms that are less than $\Delta$ away from the best arm are selected, the $\Delta$-regret incurred is zero. Only arms more than $\Delta$ away from the best arm contribute to the $\Delta$-regret. 

From the above perspective, we have revisited the application of MAB mechanisms in sponsored search auctions. First we analysed the single slot SSA setting and proposed a deterministic, exploration separated MAB mechanism called $\Delta$-UCB. We showed that $\Delta$-UCB is DSIC truthful, IR and achieves a  $\Delta$-regret of $O(\log T)$. Next we studied the more challenging setting of multi-slot SSA. In particular, we adopted the cascade model and adapted $\Delta$-UCB to this setting, first with the assumption that the prominence parameters are known. Here too, we have shown that the mechanism  is DSIC truthful, IR and achieves a  $\Delta$-regret of $O(\log T)$. We finally adapt the mechanism to the general multi-slot SSA setting where neither the CTRs nor the prominences are known. Here too our deterministic, exploration separated mechanism is DSIC truthful, IR and suffers a $\Delta$-regret of $O(\log T)$. The other mechanisms in literature for this setting are not able to obtain all these desirable properties that our mechanism achieves. They either compromise on the truthfulness, satisfying a weaker notion (truthfulness in expectation) or are forced to resort to randomness in the mechanism. 
 
 Our results are generic and apply equally well to several other applications where MAB mechanisms have been used.

%\begin{acknowledgements}
%If you'd like to thank anyone, place your comments here
%and remove the percent signs.
%\end{acknowledgements}

% BibTeX users please use one of
%\bibliographystyle{spbasic}      % basic style, author-year citations
%\bibliographystyle{spmpsci}      % mathematics and physical sciences
%\bibliographystyle{spphys}       % APS-like style for physics
%\bibliography{}   % name your BibTeX data base

% Non-BibTeX users please use
\bibliographystyle{plain} 
\bibliography{ucb_interval}

\begin{thebibliography}{10}

\bibitem{jmlrShipraThompson}
Shipra Agrawal and Navin Goyal.
\newblock Analysis of thompson sampling for the multi-armed bandit problem.
\newblock In {\em COLT}, pages 39.1--39.26, 2012.

\bibitem{UCBAuer2002}
Peter Auer, Nicolo Cesa-Bianchi, and Paul Fischer.
\newblock Finite-time analysis of the multiarmed bandit problem.
\newblock {\em Machine learning}, 47(2-3):235--256, 2002.

\bibitem{babaioff2010truthful}
Moshe Babaioff, Robert~D. Kleinberg, and Aleksandrs Slivkins.
\newblock Truthful mechanisms with implicit payment computation.
\newblock In {\em Proceedings of the Eleventh ACM Conference on Electronic
  Commerce (EC'10)}, pages 43--52. ACM, 2010.

\bibitem{Babaioff2014}
Moshe Babaioff, Yogeshwer Sharma, and Aleksandrs Slivkins.
\newblock Characterizing truthful multi-armed bandit mechanisms.
\newblock {\em SIAM Journal on Computing}, 43(1):194--230, 2014.

\bibitem{BhatAAMAS16}
Satyanath Bhat, Divya Padmanabhan, Shweta Jain, and Yadati Narahari.
\newblock A truthful mechanism with biparameter learning for online
  crowdsourcing: (extended abstract).
\newblock In {\em Proceedings of the 2016 International Conference on
  Autonomous Agents {\&} Multiagent Systems (AAMAS'16), Singapore, May 9-13,
  2016}, pages 1385--1386, 2016.

\bibitem{Biswas2015}
Arpita Biswas, Shweta Jain, Debmalya Mandal, and Y~Narahari.
\newblock A truthful budget feasible multi-armed bandit mechanism for
  crowdsourcing time critical tasks.
\newblock In {\em Proceedings of the 2015 International Conference on
  Autonomous Agents and Multiagent Systems (AAMAS'15)}, pages 1101--1109, 2015.

\bibitem{Bubeck2012a}
S{\'{e}}bastien Bubeck and Nicol{\`{o}} Cesa{-}Bianchi.
\newblock Regret analysis of stochastic and nonstochastic multi-armed bandit
  problems.
\newblock {\em Foundations and Trends in Machine Learning}, 5(1):1--122, 2012.

\bibitem{Bubeck2012b}
S{\'{e}}bastien Bubeck, Nicol{\`{o}} Cesa{-}Bianchi, and G{\'{a}}bor Lugosi.
\newblock Bandits with heavy tail.
\newblock {\em {IEEE} Transactions on Information Theory}, 59(11):7711--7717,
  2013.

\bibitem{chen2013combinatorial}
Wei Chen, Yajun Wang, and Yang Yuan.
\newblock Combinatorial multi-armed bandit: General framework and applications.
\newblock In {\em International Conference on Machine Learning (ICML)}, pages
  151--159, 2013.

\bibitem{Devanur2009}
Nikhil~R. Devanur and Sham~M. Kakade.
\newblock The price of truthfulness for pay-per-click auctions.
\newblock In {\em Proceedings of the 10th ACM Conference on Electronic Commerce
  (EC'09)}, pages 99--106, 2009.

\bibitem{dirkx2018optimizing}
Rein Dirkx and Roussos Dimitrakopoulos.
\newblock Optimizing infill drilling decisions using multi-armed bandits:
  Application in a long-term, multi-element stockpile.
\newblock {\em Mathematical Geosciences}, 50(1):35--52, 2018.

\bibitem{Feldman2014}
Zohar Feldman and Carmel Domshlak.
\newblock Simple regret optimization in online planning for markov decision
  processes.
\newblock {\em Journal of Artificial Intelligence Research (JAIR)},
  51(1):165--205, 2014.

\bibitem{gatti2015truthful}
Nicola Gatti, Alessandro Lazaric, Marco Rocco, and Francesco Trov{\`o}.
\newblock Truthful learning mechanisms for multi-slot sponsored search auctions
  with externalities.
\newblock {\em Artificial Intelligence}, 227:93--139, 2015.

\bibitem{Gatti2012}
Nicola Gatti, Alessandro Lazaric, and Francesco Trov\`{o}.
\newblock A truthful learning mechanism for contextual multi-slot sponsored
  search auctions with externalities.
\newblock In {\em Proceedings of the 13th ACM Conference on Electronic Commerce
  (EC'12)}, pages 605--622, 2012.

\bibitem{Ghalme2017}
Ganesh Ghalme, Shweta Jain, Sujit Gujar, and Y.~Narahari.
\newblock Thompson sampling based mechanisms for stochastic multi-armed bandit
  problems.
\newblock In {\em Proceedings of the 16th Conference on Autonomous Agents and
  MultiAgent Systems (AAMAS)}, pages 87--95, 2017.

\bibitem{Gonen2007}
Rica Gonen and Elan Pavlov.
\newblock An incentive-compatible multi-armed bandit mechanism.
\newblock In {\em Proceedings of the Twenty-sixth Annual ACM Symposium on
  Principles of Distributed Computing (PODC)}, pages 362--363, 2007.

\bibitem{Pavlov2009}
Rica Gonen and Elan Pavlov.
\newblock Adaptive incentive-compatible sponsored search auction.
\newblock In {\em SOFSEM 2009: Theory and Practice of Computer Science}, pages
  303--316, 2009.

\bibitem{hoeffding1963probability}
Wassily Hoeffding.
\newblock Probability inequalities for sums of bounded random variables.
\newblock {\em Journal of the American statistical association},
  58(301):13--30, 1963.

\bibitem{Jain2016}
Shweta Jain, Satyanath Bhat, Ganesh Ghalme, Divya Padmanabhan, and Y.~Narahari.
\newblock Mechanisms with learning for stochastic multi-armed bandit problems.
\newblock {\em Indian Journal of Pure and Applied Mathematics}, 47(2):229--272,
  2016.

\bibitem{JainAAMAS16}
Shweta Jain, Ganesh Ghalme, Satyanath Bhat, Sujit Gujar, and Y.~Narahari.
\newblock A deterministic {MAB} mechanism for crowdsourcing with logarithmic
  regret and immediate payments.
\newblock In {\em Proceedings of the 2016 International Conference on
  Autonomous Agents {\&} Multiagent Systems (AAMAS'16), Singapore, May 9-13,
  2016}, pages 86--94, 2016.

\bibitem{Jain2018}
Shweta Jain, Sujit Gujar, Satyanath Bhat, Onno Zoeter, and Y.~Narahari.
\newblock A quality assuring, cost optimal multi-armed bandit mechanism for
  expertsourcing.
\newblock {\em Artificial Intelligence}, 254(Supplement C):44 -- 63, 2018.

\bibitem{Kapoor2018}
Sayash Kapoor, Kumar~Kshitij Patel, and Purushottam Kar.
\newblock Corruption-tolerant bandit learning.
\newblock {\em Machine Learning}, pages 1--29, Aug 2018.

\bibitem{Kleinberg2010}
Robert Kleinberg, Alexandru Niculescu-Mizil, and Yogeshwer Sharma.
\newblock Regret bounds for sleeping experts and bandits.
\newblock {\em Machine Learning}, 80(2):245--272, Sep 2010.

\bibitem{Liu2017}
Chang Liu, Qingpeng Cai, and Yukui Zhang.
\newblock Multi-armed bandit mechanism with private histories.
\newblock In {\em Proceedings of the 16th Conference on Autonomous Agents and
  MultiAgent Systems (AAMAS)}, pages 1607--1609, 2017.

\bibitem{Myerson1991}
Roger~B Myerson.
\newblock {\em Game Theory: Analysis of Conflict}.
\newblock Harvard University Press, 1991.

\bibitem{NARAHARI2014}
Y.~Narahari.
\newblock {\em Game Theory and Mechanism Design}.
\newblock IISc Press and the World Scientific Publishing Company, 2014.

\bibitem{Nisan2007jair}
Noam Nisan and Amir Ronen.
\newblock Computationally feasible vcg mechanisms.
\newblock {\em Journal of Artificial Intelligence Research (JAIR)},
  29(1):19--47, 2007.

\bibitem{NISAN2007}
Noam Nisan, Tim Roughgarden, Eva Tardos, and Vijay~V. Vazirani.
\newblock {\em Algorithmic Game Theory}.
\newblock Cambridge University Press, New York, NY, USA, 2007.

\bibitem{Santiago2017}
Santiago Ontanon.
\newblock Combinatorial multi-armed bandits for real-time strategy games.
\newblock {\em Journal of Artificial Intelligence Research (JAIR)},
  58:665--702, 2017.

\bibitem{PadmanabhanIJCNN16}
Divya Padmanabhan, Satyanath Bhat, Dinesh Garg, Shirish~K. Shevade, and
  Y.~Narahari.
\newblock A robust {UCB} scheme for active learning in regression from
  strategic crowds.
\newblock In {\em International Joint Conference on Neural Networks, {IJCNN}
  2016,}, pages 2212--2219, 2016.

\bibitem{scott2010modern}
Steven~L Scott.
\newblock A modern bayesian look at the multi-armed bandit.
\newblock {\em Applied Stochastic Models in Business and Industry},
  26(6):639--658, 2010.

\bibitem{Akash2012}
Akash~Das Sharma, Sujit Gujar, and Y.~Narahari.
\newblock Truthful multi-armed bandit mechanisms for multi-slot sponsored
  search auctions.
\newblock {\em Current Science}, 103(9):1064--1077, 2012.

\bibitem{vickrey1961counterspeculation}
William Vickrey.
\newblock Counterspeculation, auctions, and competitive sealed tenders.
\newblock {\em The Journal of Finance}, 16(1):8--37, 1961.

\end{thebibliography}

\end{document}